\newcommand{\solution}{FedSVD\xspace}
\newtheorem{theorem}{Theorem}
\newcommand{\tabincell}[2]{\begin{tabular}{@{}#1@{}}#2\end{tabular}}
\Crefname{equation}{Eq.}{Eqs.}
\Crefname{figure}{Fig.}{Figs.}
\Crefname{tabular}{Tab.}{Tabs.}
\Crefname{table}{Tab.}{Tabs.}
\begin{document}

\sloppy

\title{Practical Lossless Federated Singular Vector Decomposition over Billion-Scale Data}


%

\settopmatter{authorsperrow=4}

\author{Di Chai}
\email{dchai@cse.ust.hk}
\affiliation{%
	\institution{Hong Kong University of Science and Technology\\Clustar Co., Ltd}
	\country{}
}
\author{Leye Wang}
\email{leyewang@pku.edu.cn}
\affiliation{
	\institution{MOE Key Lab of High Confidence Software Technologies,\\Peking University}
	\country{}
}

\author{Junxue Zhang}
\email{jzhangcs@cse.ust.hk}
\affiliation{
	\institution{Hong Kong University of Science and Technology\\Clustar Co., Ltd}
	\country{}
}

\author{Liu Yang}
\email{lyangau@cse.ust.hk}
\affiliation{
	\institution{Hong Kong University of Science and Technology\\Clustar Co., Ltd}
	\country{}
}
\author{Shuowei Cai}
\email{scaiak@cse.ust.hk}
\affiliation{
	\institution{Hong Kong University of Science and Technology\\Clustar Co., Ltd}
	\country{}
}

%

\author{Kai Chen}
\email{kaichen@cse.ust.hk}
\affiliation{
	\institution{Hong Kong University of Science and Technology}
	\country{}
}

\author{Qiang Yang}
\email{qyang@cse.ust.hk}
\affiliation{
	\institution{Hong Kong University of Science and Technology\\AI Group, WeBank Co., Ltd}
	\country{}
}

\renewcommand{\shortauthors}{Di Chai et al.}

\begin{abstract}
	
With the enactment of privacy-preserving regulations, e.g., GDPR, federated SVD is proposed to enable SVD-based applications over different data sources without revealing the original data. However, many SVD-based applications cannot be well supported by existing federated SVD solutions. The crux is that these solutions, adopting either differential privacy (DP) or homomorphic encryption (HE), suffer from accuracy loss caused by unremovable noise or degraded efficiency due to inflated data.

In this paper, we propose \solution, a practical lossless federated SVD method over billion-scale data, which can simultaneously achieve lossless accuracy and high efficiency. At the heart of \solution is a lossless matrix masking scheme delicately designed for SVD: 1) While adopting the masks to protect private data, \solution completely removes them from the final results of SVD to achieve lossless accuracy; and 2) As the masks do not inflate the data, FedSVD avoids extra computation and communication overhead during the factorization to maintain high efficiency. Experiments with real-world datasets show that FedSVD is over $10000\times$ faster than the HE-based method and has 10 orders of magnitude smaller error than the DP-based solution ($\epsilon=0.1,\delta=0.1$) on SVD tasks. We further build and evaluate \solution over three real-world applications: principal components analysis (PCA), linear regression (LR), and latent semantic analysis (LSA), to show its superior performance in practice. On federated LR tasks, compared with two state-of-the-art solutions: FATE~\cite{liu2021fate} and SecureML~\cite{Secureml}, \solution-LR is $100\times$ faster than SecureML and $10\times$ faster than FATE.

\end{abstract}

\begin{CCSXML}
<ccs2012>
   <concept>
       <concept_id>10002978.10002991.10002995</concept_id>
       <concept_desc>Security and privacy~Privacy-preserving protocols</concept_desc>
       <concept_significance>500</concept_significance>
       </concept>
   <concept>
       <concept_id>10010147.10010257.10010293.10010309</concept_id>
       <concept_desc>Computing methodologies~Factorization methods</concept_desc>
       <concept_significance>500</concept_significance>
       </concept>
 </ccs2012>
\end{CCSXML}

\ccsdesc[500]{Security and privacy~Privacy-preserving protocols}
\ccsdesc[500]{Computing methodologies~Factorization methods}

\keywords{Federated Learning; SVD; Privacy-preserving Matrix Factorization}

\maketitle

\section{Introduction} \label{sec:introduction}

Singular vector decomposition~(SVD) is an essential primitive to build various data analytics and machine learning applications over large-scale data. SVD is widely used in 1) principal component analysis~(PCA) to reduce the dimensionality of large-scale features; 2) latent semantic analysis~(LSA) \cite{dumais2004latent} on large-scale natural language processing (NLP) tasks to extract compressed embedding features. \highlight{These large-scale data usually come from various data sources in the real world \cite{yang2019federated} and it is hard for a single institution to collect sufficient data to produce robust results.}

However, since the enacting of privacy-preserving laws, \eg, GDPR~\cite{gdpr}, the data from different sources are restricted from being collected in one central place for conventional centralized SVD computation. To solve the problem, pioneer researchers have explored the SVD in a \emph{federated}\footnote{We use the term of \emph{federated} since the Federated SVD works similarly as Federated Learning~\cite{yang2019federated}.} approach~(Federated SVD), \ie, the SVD computation can be performed cooperatively by different participants without revealing or gathering their original data. We will give a formal definition of Federated SVD in \S\ref{sec:federated_svd}.

In this paper, we show that these Federated SVD solutions~\cite{FedPCA,WDA,PPD-SVD,han2009privacy} cannot \emph{efficiently} and \emph{accurately} process large-scale data, making them impractical to support real-world SVD applications. Specifically, prior works either leverage the differential privacy~(DP) or the homomorphic encryption~(HE) for privacy protection. The DP-based solution~\cite{FedPCA} suffers from dramatic \emph{accuracy loss} because it adds unremovable noise in the data to hide individual privacy. Our evaluation results show that DP-based federated SVD ($\epsilon=0.1,\delta=0.1$) has over $10$ orders of magnitude larger accuracy loss compared with centralized SVD~(more details are presented in \S\ref{sec:motivation}). The inherent loss of data utility hindered its application in the real world \cite{sphinx}, \eg, inaccurate SVD results during \highlight{medical} studies \cite{DBLP:journals/jbi/LatifogluPKG08} can cause severe issues in the subsequent medical diagnosis tasks. In contrast, the HE-based solution~\cite{PPD-SVD} can achieve lossless accuracy. However, the HE involves large computation/communication \emph{overhead} due to the inflated data, causing significant performance degradation. Our evaluation results show that it takes $\sim 15$ years for HE-based methods~\cite{PPD-SVD} to factorize a 1K $\times$ 100K matrix, \ie, 100 million elements~(more details in in \S\ref{sec:motivation}).

We ask: \emph{Can we build a practical lossless and efficient federated SVD solution over billion-scale data}. Our answer is \solution. The core of \solution is a matrix masking method delicately designed for the SVD algorithm. The advantage of this matrix masking method is that it can simultaneously achieve lossless accuracy and high efficiency. Specifically, 1) The masking method protects users' private data by multiplying two random orthogonal matrices. These random masks can be removed entirely from the final SVD results to achieve \emph{lossless} accuracy; 2) Unlike HE-based methods that significantly inflate the original data (\eg, from 64-bits to 2048-bits), \solution's masking method does not inflate the data size. Therefore, \solution can achieve similar performance as centralized SVD theoretically. Furthermore, based on the matrix masking method, we design optimization strategies including block-based mask generation, efficient data masking/recovering through block matrix multiplication, mini-batch secure aggregation, and advanced disk offloading to further improve the efficiency of communication, computation, and memory usage~(more details in \S\ref{sec:design}). Eventually, we have provided privacy analysis and attack experiments showing that \solution is secure and the raw data cannot be revealed from the masked data when the hyper-parameter is appropriately settled.

We implement and evaluate \solution on SVD tasks and three applications: PCA, linear regression (LR), and LSA. Our evaluation results show that: 1) On SVD tasks, \solution has 10 orders of magnitude smaller error compared with DP-based methods, and \solution is more than $10000\times$ faster than HE-based solution. Approximately, the HE-based method needs more than 15 years to factorize 1K $\times$ 100K data (\ie, 100 million elements), while \solution only needs 16.3 hours to factorize 1K $\times$ 50M data which containers 50 billion elements; 2) On PCA application, \solution takes 32.3 hours to compute the top 5 principal components on 100K $\times$ 1M synthetic data, which contains 100 billion elements; 3) On LR application, we have compared \solution with two well-known federated LR solutions: SecureML \cite{Secureml} and FATE \cite{liu2021fate}. The evaluation results show that \solution is 100$\times$ faster than SecureML and 10$\times$ faster than FATE; 4) On LSA application, \solution takes 3.71 hours to compute the top 256 eigenvectors on a 62K $\times$ 162K MovieLens real-world datasets, which contains 10 billion elements; 5) We perform attack experiments showing that, given proper hyper-parameter, \solution is secure against state-of-the-art~(SOTA) ICA attack~\cite{ica_attack} which is delicately designed for masked data.

The \solution is fully open-sourced \footnote{\url{https://github.com/Di-Chai/FedEval/tree/master/research/FedSVD}} and we believe, besides the three mentioned applications, \solution can benefit more applications that require SVD as their cores under the increasingly strict data protection laws and regulations.

\section{Background \& Motivation}

\subsection{SVD and Federated SVD} \label{sec:federated_svd}

SVD decomposes matrix $\mathbf{X} \in \mathbb{R}^{m \times n}$ into a product of three matrices
\begin{equation}\label{eq:svd_def}
    \mathbf{X = U \Sigma V^T}
\end{equation}
where $\mathbf{U} \in \mathbb{R}^{m \times m}$ and $\mathbf{V^T} \in \mathbb{R}^{n \times n}$ are the left and right singular vectors, $\mathbf{\Sigma} \in \mathbb{R}^{m \times n}$ is a rectangular diagonal matrix containing the singular values. $\mathbf{U}$ and $\mathbf{V^T}$ are orthogonal matrices. 

SVD is an essential building block in many studies. Here we introduce two of the most well-known SVD-based applications and they all require lossless accuracy and large-scale performance to be simultaneously achieved. \highlight{1) Principal components analysis (PCA). PCA is one of the most essential techniques for eliminating redundancy in high-dimensional data, and it is widely used in medical diagnosis \cite{DBLP:journals/jbi/LatifogluPKG08}, biometrics \cite{price2006principal}, and many other applications \cite{sanguansat2012principal}. SVD is the standard solution to conduct PCA. The SVD-based PCA deals with large-scale private data containing high-dimensional features, and it also requires lossless accuracy to avoid severe issues like inaccurate disease analysis.} 2) Linear regression (LR). LR is a popular machine learning model commonly used for risk management, marketing, \etc, for its high efficiency and interpretability. SVD could serve as a basis of the least square solution to LR. Compared with stochastic gradient descent (SGD), solving LR through SVD requires only one iteration and guarantees that the result is the global optimum. In such commercial scenarios, SVD-based LR deals with large-scale sensitive user data and requires lossless precision to avoid financial loss.

Typically, the federated SVD is defined as following: assume we have $k$ parties, and each party i owns data matrix $\mathbf{X}_i \in \mathbb{R}^{m \times n_i}$. Those $k$ parties would like to carry out SVD jointly on data $\mathbf{X}=[\mathbf{X}_1, \mathbf{X}_2,...,\mathbf{X}_k]$, where $\mathbf{X} \in \mathbb{R}^{m \times n}$ and $n=\sum_{i=1}^k n_i$.

\vspace{-2mm}
\begin{equation}\label{eq:fedsvd_def}
    \mathbf{[X_1,...,X_i,...,X_k] = U \Sigma [V_1^T,...,V_i^T,...,V_k^T]}
\end{equation}
\vspace{-2.5mm}

\Cref{eq:fedsvd_def} shows the federated SVD results. Accordingly, in a federated SVD solution, the $i$-th party ($1 \le i \le k$) gets $\mathbf{X}_i = \mathbf{U} \mathbf{\Sigma} \mathbf{V}_i^T$, where $\mathbf{U, \Sigma}$ are shared results among all participants, and $\mathbf{V}_i^T\in \mathbb{R}^{n \times n_i}$ is the secret result possessed by party-$i$. Figure \ref{fig:problem} also illustrates the above problem definition. Party-$i$'s data (\textit{i.e.,} $\mathbf{X}_i$) cannot be leaked to any other parties during the computation.

\vspace{-4mm}
\begin{figure}[h]
	\centering
	\includegraphics[scale=0.3]{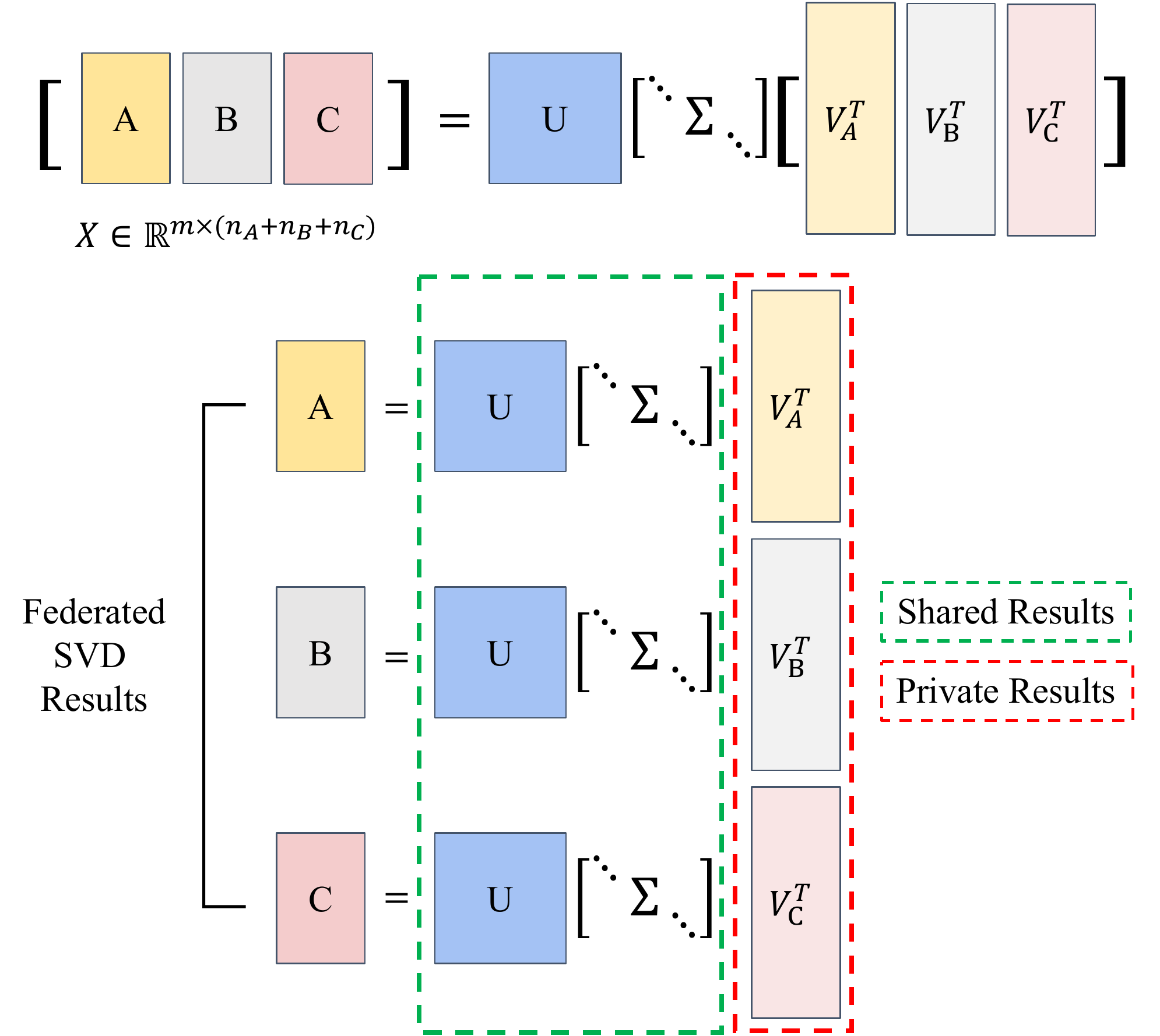}
	\caption{Problem formulation of federated SVD}
	\label{fig:problem}
\end{figure}

The real-world applications mainly contain two data partition scenarios,~\ie, horizontally and vertically partitioned scenarios. Horizontally partitioned scenario assumes that different parties share the same feature space but different sample space, while the vertically partitioned scenario assumes that participants share different feature space but the same sample space. In this paper, we do not make assumption on the data partition schema and our method is suitable for both two scenarios. Because one type of partition could be easily transferred to another through matrix transpose in SVD. Without loss of generality, as shown in \Cref{fig:problem}, we assume the data matrix is vertically partitioned among parties.

\subsection{Prior Work Suffering from Either Accuracy Loss or Performance Penalty}
\label{sec:motivation}

\vspace{-4.5mm}
\begin{figure}[h]
	\small
	\centering
	\subfigure[DP-SVD ($\delta=0.01$) has 7 $\sim$ 14 magnitudes larger error compared with \solution on four datasets.]{
		\label{fig:motivation_dp}
		\includegraphics[scale=0.38]{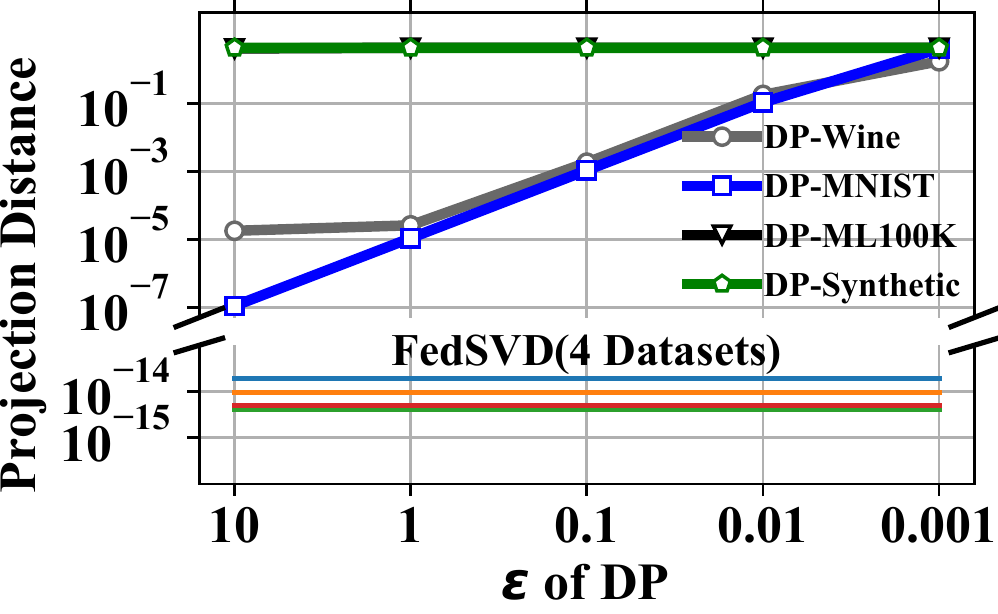}
	}
	\hfill
	\hfill
	\subfigure[HE-based SVD needs 15.1 years to factorize 1K $\times$ 100K data (100 million elements).]{
		\label{fig:motivation_he}
		\includegraphics[scale=0.38]{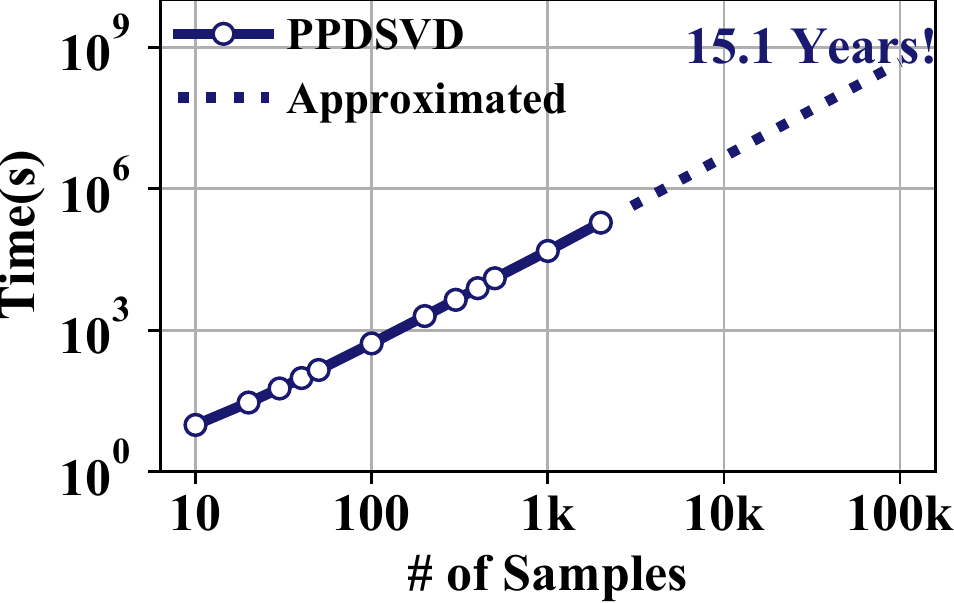}
	}
	\caption{Quantifying accuracy loss and performance penalty.}
	\label{fig:motivation}
\end{figure}

\vspace{1.5mm}

Pilot federated learning work designed privacy-preserving SVD methods in two brunches: the DP-based and HE-based methods.

\parab{Accuracy Loss:} On the one hand, \citet{FedPCA} proposed a federated and ($\epsilon,\delta$)-DP principal component analysis method, in which the leaf nodes apply DP locally and upload the local PCA results to one root node, which will asynchronously aggregate the received updates. Although DP-based solutions are easy to implement and does not have efficiency issue, it unavoidably brings loss to the data utility and hindered its application in real-world \cite{sphinx}. For example, accuracy loss of SVD in medical study can cause severe issues in subsequent medical diagnosis. \Cref{fig:motivation_dp} shows that DP-based SVD has 7 $\sim$ 14 orders of magnitude larger error compared with \solution under different parameters.

\parab{Performance Penalty:} On the other hand, \citet{PPD-SVD} proposed a HE-based SVD solution, in which the parties jointly compute the covariance under additive HE (\ie, HE algorithm that only supports addition operation on cipher-text), then a trusted server decrypts the covariance matrix and conducts the SVD. Although HE is lossless, it brings heavy computation and communication overhead because it swells up the data size from 64-bit to 2048-bit, assuming the key length is set to $2^{20}$ bits. Thus HE-based SVD has large computation overhead. In particular, as shown in \Cref{fig:motivation_he}, HE-based method needs more than 15 years to factorize a 1K $\times$ 100K data (\ie, million-scale data).

\parab{Conclusion:} None of the exiting federated SVD work can simultaneously achieve lossless accuracy and high efficiency.

\section{\solution}
\label{sec:design}

\begin{figure*}[h!]
	\centering
	\includegraphics[scale=0.48]{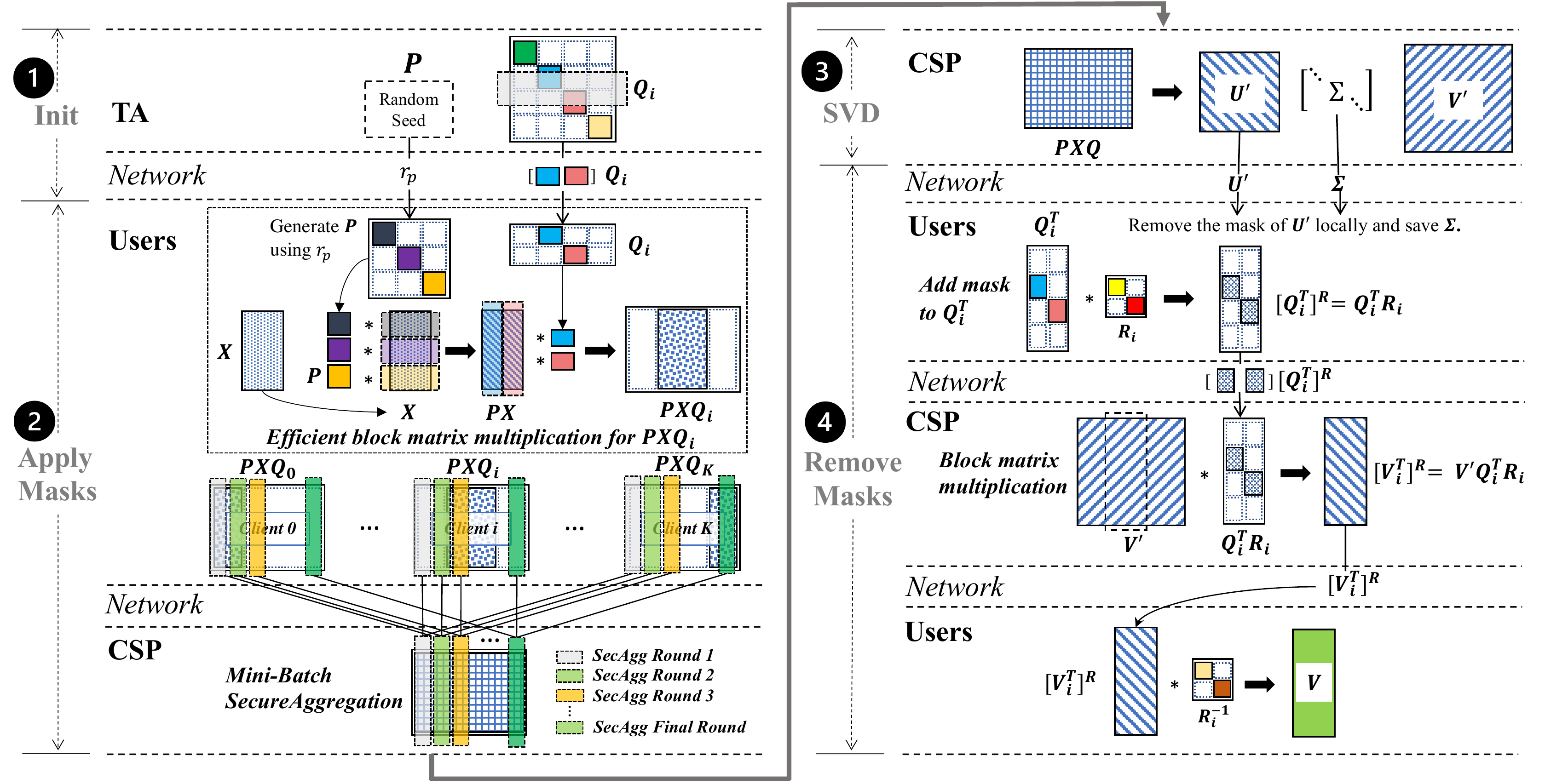}
	\caption{Detailed Workflow of \solution, which has four steps: Step \ding{202}: Trusted Authority (TA) initialize and send the masks to users. Step \ding{203}: Users apply masks and deliver the masked data to the computation service provider (CSP) through secure aggregation. Step \ding{204}: CSP conducts standard SVD on masked data. Step \ding{205}: Users remove the masks and get final results.}
	\label{fig:framework}
\end{figure*}

To solve this problem, we ask: \textit{Can we find a type of removable noise to protect data privacy as well as keep the data size unchanged to simultaneously achieve lossless accuracy and high efficiency?} Our answer is \solution. Briefly, 1) we propose a removable random mask delicately designed for SVD to protect privacy, and the masks could be completely removed from SVD results; 2) The masked data has the same size as the raw data, which results in no efficiency overhead during matrix decomposition. Meanwhile, we propose optimizations from algorithm and system aspects, including block-based mask generation, efficient data masking and data recovering, mini-batch secure aggregation, and advanced disk offloading, to further improve efficiency. With our delicate design, \solution could achieve lossless accuracy and high practicality over billion-scale data. Furthermore, we provide privacy analysis on \solution and show that \solution is highly confidential. In this section, we present the technical details of \solution.

\parab{Roles:} According to the different functionalities, we specify three types of roles in our system: 
 
\begin{ecompact}
	\item \textbf{Trusted Authority (TA)}: TA is responsible for generating removable secret masks and delivering them to the users. TA can remain offline once the system initialization is done.
	\item \textbf{Computation Service Provider (CSP)}: The CSP is responsible for running a standard SVD algorithm on the masked data and delivering the masked SVD results to the users.
	\item \textbf{Users}: The parties that own raw data (\ie, $\mathbf{X}$) and wish to run an SVD-based algorithm jointly.
\end{ecompact}

\parab{Workflow Overview:} \solution has the following four steps, which is also illustrated in \Cref{fig:framework} \label{sec:fedsvd_steps}:

\textbf{Step \ding{202}} : TA generates two removable random orthogonal masks $\mathbf{P} \in \mathbb{R}^{m \times m}$ and $\mathbf{Q} \in \mathbb{R}^{n \times n}$. Mask $\mathbf{P}$ is broadcasted to all users. The matrix $\mathbf{Q}$ is horizontally split into $k$ parts $\mathbf{Q}^T=[\mathbf{Q}_1^T,.,\mathbf{Q}_i^T,.,\mathbf{Q}_k^T]$, and TA sends $\mathbf{Q}_i$ to user-i. The detail of the removable random mask is introduced in \S\ref{sec:init}. To support billion-scale applications, we have proposed efficiency mask generation (\S\ref{sec:init}) and delivery method (\S\ref{sec:apply_mask}), reducing the computation and communication complexity from $O(n^3)$ and $O(m^2+n^2)$ to $O(n)$.
	
\textbf{Step \ding{203}} : All users compute $\mathbf{X}'_i=\mathbf{P}\mathbf{X}_i\mathbf{Q}_i$, where $\mathbf{X}'_i$ is the local masked data. The CSP gets $\mathbf{X'}$ through secure aggregation on $\mathbf{X}'_i$. To support billion-scale data, we propose efficient block matrix multiplication to reduce computation complexity from $O(m^2n+mn^2)$ to $O(mn)$ and mini-batch secure aggregation to reduce the memory usage at the server. More details are introduced in \S\ref{sec:apply_mask}.
	
\textbf{Step \ding{204}} :  CSP runs a standard SVD algorithm, factorizing $\mathbf{X}'$ into $\mathbf{U}' \mathbf{\Sigma} \mathbf{V}'^T$. We do not specify the algorithm (\eg, householder transformation) of solving the SVD problem, and \solution can work with any lossless SVD solver.

\textbf{Step \ding{205}} : Users downloads $\mathbf{U}', \mathbf{\Sigma}$, and recover $\mathbf{U}$ by $\mathbf{P}^T\mathbf{U}'$. $\mathbf{V}_i^T$ is jointly recovered under the protection of random masks between the CSP and users. We propose efficient mask removing of $\mathbf{V}_i^T$ via block matrix computation which reduces the complexity from $O(n_i^3)$ to $O(n_i)$. The details are introduced in \S\ref{sec:remove_masks}.

Organization of this section: \S\ref{sec:init} introduces the removable random mask delicately designed for SVD. \S\ref{sec:apply_mask} introduces the detail of mask initialization and applying the mask, \S\ref{sec:remove_masks} introduces the detail of removing mask. In \S\ref{sec:Offloading Memory Burden to Disks}, we propose an advanced disk offloading strategy according to the data access patterns. \S\ref{sec:privacy_proof} gives privacy analysis of \solution. 

\subsection{Removable Random Masks for SVD} \label{sec:init}

We propose a masking method that allows running SVD directly on the masked data and the masks could be removed from the results. Denoting the data matrix as $\mathbf{X}$, we use two random orthogonal matrices $\mathbf{P}$ and $\mathbf{Q}$ to mask the data $\mathbf{X}$ as $\mathbf{X}'=\mathbf{PXQ}$. \Cref{theorem:1} proves that $\mathbf{X}'$ has the same singular values with $\mathbf{X}$, the singular vectors of $\mathbf{X}$ and $\mathbf{X}'$ can be orthogonal transformed to each other using matrices $\mathbf{P}$ and $\mathbf{Q}$. Thus we can get the singular vectors of $\mathbf{X}$ by removing the masks from the singular vectors of $\mathbf{X}'$ (\textit{i.e.,} using orthogonal transformation).

\begin{theorem} \label{theorem:1}
	For an arbitrary matrix $\mathbf{X} \in \mathbb{R}^{m \times n}$ with SVD result $\mathbf{X}=\mathbf{U \Sigma V^T}$, we can use two random orthogonal matrices $\mathbf{P} \in \mathbb{R}^{m \times m}$ and $\mathbf{Q} \in \mathbb{R}^{n \times n}$ to mask $\mathbf{X}$ into $\mathbf{X}'=\mathbf{PXQ}$. Assuming the SVD result of $\mathbf{X
	'}$ is $\mathbf{U}'\mathbf{\Sigma}'\mathbf{V'}^T$. Then we can get SVD results of $\mathbf{X}$ through: $\mathbf{\Sigma}=\mathbf{\Sigma}'$, $\mathbf{U}=\mathbf{P}^T\mathbf{U}'$ and $\mathbf{V^T}=\mathbf{V'}^T\mathbf{Q}^T$.
\end{theorem}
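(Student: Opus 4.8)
The plan is to prove the statement by direct substitution, exploiting the fact that a product of orthogonal matrices is again orthogonal. First I would substitute the given SVD $\mathbf{X} = \mathbf{U\Sigma V^T}$ into the masking equation to obtain
\begin{equation*}
\mathbf{X}' = \mathbf{PXQ} = \mathbf{P}(\mathbf{U\Sigma V^T})\mathbf{Q} = (\mathbf{PU})\,\mathbf{\Sigma}\,(\mathbf{V^T Q}).
\end{equation*}
Because $\mathbf{P}$ and $\mathbf{U}$ are orthogonal, $\mathbf{PU}$ is orthogonal; likewise $\mathbf{V^T Q} = (\mathbf{Q^T V})^T$ is orthogonal. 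Since $\mathbf{\Sigma}$ is already the rectangular diagonal matrix of nonnegative singular values, the right-hand side is itself a valid SVD of $\mathbf{X}'$. By the uniqueness of singular values, this immediately yields $\mathbf{\Sigma}' = \mathbf{\Sigma}$, establishing the first claim.

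For the singular-vector relations, the cleanest route is to verify the reconstruction directly rather than to reason from the factorization above. Taking the returned SVD $\mathbf{X}' = \mathbf{U}'\mathbf{\Sigma}'\mathbf{V}'^T$ of the masked matrix, I would compute
\begin{equation*}
(\mathbf{P^T U}')\,\mathbf{\Sigma}'\,(\mathbf{V}'^T \mathbf{Q^T}) = \mathbf{P^T}(\mathbf{U}'\mathbf{\Sigma}'\mathbf{V}'^T)\mathbf{Q^T} = \mathbf{P^T X' Q^T} = \mathbf{P^T}(\mathbf{PXQ})\mathbf{Q^T} = \mathbf{X},
\end{equation*}
using $\mathbf{P^T P} = \mathbf{I}$ and $\mathbf{Q Q^T} = \mathbf{I}$. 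Since $\mathbf{P^T U}'$ and $\mathbf{V}'^T \mathbf{Q^T}$ are again orthogonal and $\mathbf{\Sigma}' = \mathbf{\Sigma}$ is diagonal with nonnegative entries, the triple $(\mathbf{P^T U}', \mathbf{\Sigma}', \mathbf{V}'^T \mathbf{Q^T})$ is a legitimate SVD of $\mathbf{X}$, which justifies recovering $\mathbf{U} = \mathbf{P^T U}'$ and $\mathbf{V^T} = \mathbf{V}'^T \mathbf{Q^T}$.

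The main subtlety, which I expect to be the only real obstacle, is the non-uniqueness of singular vectors: sign flips and, when singular values repeat, rotations within the corresponding subspaces mean there is no canonical $\mathbf{U}$ to equate $\mathbf{P^T U}'$ against. I would therefore read the result not as a literal equality between $\mathbf{P^T U}'$ and a pre-fixed $\mathbf{U}$, but as the guarantee that $\mathbf{P^T U}'$ and $\mathbf{V}'^T \mathbf{Q^T}$ form a valid set of left and right singular vectors of $\mathbf{X}$. This is precisely what the application requires: no matter which admissible SVD the CSP computes for $\mathbf{X}'$, the client-side orthogonal transformation reproduces a correct SVD of $\mathbf{X}$ with identical singular values. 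The remaining steps, namely confirming orthogonality of the products and nonnegativity of $\mathbf{\Sigma}'$, are routine and need no further elaboration.
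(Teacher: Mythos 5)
Your proposal is correct, and for the singular-vector part it takes a genuinely different (and more careful) route than the paper. The paper's proof stops after the forward substitution: it writes $\mathbf{X}'=(\mathbf{PU})\mathbf{\Sigma}(\mathbf{V}^T\mathbf{Q})$, checks that $\mathbf{PU}$ and $\mathbf{V}^T\mathbf{Q}$ are orthogonal, and then directly asserts $\mathbf{PU}=\mathbf{U}'$, $\mathbf{V}^T\mathbf{Q}=\mathbf{V}'^T$, implicitly treating the SVD as unique. As you observe, that identification is not literally valid: the CSP's solver may return a different admissible $\mathbf{U}'$ (sign flips, or arbitrary rotations within eigenspaces of repeated singular values), so equating $\mathbf{U}'$ with $\mathbf{PU}$ is a small but real gap. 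Your reverse-direction verification $(\mathbf{P}^T\mathbf{U}')\,\mathbf{\Sigma}'\,(\mathbf{V}'^T\mathbf{Q}^T)=\mathbf{P}^T\mathbf{X}'\mathbf{Q}^T=\mathbf{X}$, combined with orthogonality of the recovered factors, shows that whatever SVD the CSP computes, the unmasking step yields \emph{a} valid SVD of $\mathbf{X}$ with the correct (unique) singular values --- which is exactly the guarantee the protocol needs, since Step 5 is applied to the solver's actual output. The paper's version is shorter and suffices when singular values are distinct and a sign convention is fixed; yours is the statement that actually holds in general and is the one the lossless-accuracy claim rests on.
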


\vspace{-3mm}

\begin{proof}
	By plugging $\mathbf{X}=\mathbf{U \Sigma V^T}$ into $\mathbf{X}'$, $\mathbf{X}'$ could be represented as $\mathbf{X}'=\mathbf{PXQ}=\mathbf{(PU) \Sigma} \mathbf{(V}^T\mathbf{Q)}$. According to \Cref{eq:pq_orth}, $\mathbf{PU}$ and $\mathbf{V}^T\mathbf{Q}$ are orthogonal matrices:
	\begin{equation}
		\begin{aligned} \label{eq:pq_orth}
			\mathbf{(PU)^{-1}=U^{-1}P^{-1}} &=\mathbf{U^TP^T=(PU)^T} \\
			\mathbf{(V^TQ)^{-1}=Q^{-1}(V^{T})^{-1}} &= \mathbf{Q^TV=(V^TQ)^T} \\
		\end{aligned}
	\end{equation}
	Then $\mathbf{(PU) \Sigma} \mathbf{(V}^T\mathbf{Q)}$ is the SVD result of $\mathbf{X}'$. Accordingly, $\mathbf{PU}=\mathbf{U}'$, $\mathbf{\Sigma}=\mathbf{\Sigma}'$, and $\mathbf{V}^T\mathbf{Q}=\mathbf{V'}^T$. Then $\mathbf{U}=\mathbf{P}^T\mathbf{U}'$ and $\mathbf{V^T}=\mathbf{V'}^T\mathbf{Q}^T$.
\end{proof}

\vspace{-2mm}

We present a random orthogonal matrix generation method in \Cref{alg:random_orthogonal} using the Gram-Schmidt process \cite{daniel1976reorthogonalization}. It is proved in prior work \cite{gupta2018matrix} that Gram-Schmidt process on Gaussian matrices produces uniformly distributed random orthogonal matrices.

\parab{Block-based Efficient Mask Generation:} However, the complexity of Gram-Schmidt process on a $n$ dimensional square matrix is $O(n^3)$ \cite{daniel1976reorthogonalization}. Thus we propose an efficient random orthogonal matrix generation algorithm through building blocks, which is presented in Algorithm \ref{alg:random_orthogonal_block}. Briefly, we decompose the problem of generating a $n$ dimensional orthogonal matrix into generating small orthogonal matrices with size $b$, placing these small matrices at the diagonal position, and forming a $n$ dimensional matrix. Then the complexity of generating $n$ dimensional orthogonal matrix reduces to $O(b^3\frac{n}{b})=O(b^2n)=O(n)$, where $b \ll n$.

\vspace{-2mm}

\begin{algorithm}[h]
    \small
	\caption{Generate Random Orthogonal Matrix}
	\label{alg:random_orthogonal}
	\KwIn{Dimension of the matrix $n$}
	\KwOut{Orthogonal matrix $\mathbf{Q} \in \mathbb{R}^{n \times n}$}
	\SetKwFunction{FMain}{Orthogonal}
	\SetKwProg{Fn}{Function}{:}{}
	\Fn{\FMain{$n$}}{
		Randomly sample matrix $\mathbf{R} \in \mathbb{R}^{n \times n}$, where $R_{i,j} \sim \mathcal{N}(0, 1)$\\
		$[\mathbf{Q}, \sim] = GramSchmidt(\mathbf{R})$ \\
		\Return $\mathbf{Q}$
	}
	\textbf{End Function}
	\vspace{-1mm}
\end{algorithm}

\vspace{-8mm}

\begin{algorithm}[!h]
    \small
	\caption{Efficient Orthogonal Matrix Generation Through Building Blocks}
	\label{alg:random_orthogonal_block}
	\KwIn{Dimension of the matrix $n$, size of building blocks $b$}
	\KwOut{Orthogonal matrix $\mathbf{Q} \in \mathbb{R}^{n \times n}$}
	\SetKwFunction{FMain}{EfficientOrthogonal}
	\SetKwProg{Fn}{Function}{:}{}
	\Fn{\FMain{$n, b$}}{
	    $\mathbf{Q}$ $\leftarrow$ [], $i\leftarrow0$ \\
	    \While{$i<n$}{
	        $b' \leftarrow min(b, n-i)$ ~\\
	        $\mathbf{Q}_b \leftarrow Orthogonal(b')$ \tcp{\Cref{alg:random_orthogonal}}
	        $\mathbf{Q} \leftarrow \begin{bmatrix}\mathbf{Q} & \mathbf{0} \\ \mathbf{0}  & \mathbf{Q_b} \\ \end{bmatrix}$, $i\leftarrow i+b'$ \\
	    }
	    \Return $\mathbf{Q}$
	 }
	\textbf{End Function}
	\vspace{-1mm}
\end{algorithm}

\vspace{-4mm}

\parab{Block size controls the trade-off between efficiency and privacy protection} It is worth noting that the block size (\ie, $b$) simultaneously impacts the system privacy protection and efficiency. Theoretically, large block size increases the freedom of the masks, thus increases the effectiveness of privacy protection. Meanwhile, large block size increases the computation overhead, thus decreases the system efficiency. We have reported attack experiments using the SOTA attack method in \S\ref{sec:attack} showing that the attack fails in recovering valid information as long as $b$ is large enough. We set $b=1000$ in our experiments since our attacking experiments on many datasets show that $1000$ is a good choice of gaining enough privacy protection and benefiting from the efficiency brought by the block-based optimizations. The proper block size may differ on different datasets, and we suggest adjusting block size according to the datasets in the application, which is also discussed in \S\ref{sec:attack}.

\subsection{Initialization \& Applying the Masks} \label{sec:apply_mask}

We propose a federated computation process based on the removable random masks to apply masks on the raw data. At the beginning of the computation, TA holds masks $\mathbf{P,Q}$ and users jointly hold $\mathbf{X}=[\mathbf{X}_1, \mathbf{X}_2,...,\mathbf{X}_k]$. At the end of the computation, CSP receives $\mathbf{X}'=\mathbf{PXQ}$ and does not learn any other information.

Equation \ref{eq:x'} shows our idea of federally computing $\mathbf{X}'$. According to the rule of block matrix multiplication, we can decompose $\mathbf{PXQ}$ into $\sum_{i=1}^k \mathbf{P}\mathbf{X}_i\mathbf{Q}_i$.

\vspace{-5mm}
\begin{equation} \label{eq:x'}
		\mathbf{X'} = \mathbf{PXQ =P [X_1,.,X_i,.,X_K] [Q_1^T,.,Q_i^T,.,Q_k^T]^T} = \sum_{i=1}^{k} \mathbf{PX_iQ_i} \\
\end{equation}
\vspace{-3mm}

Thus the federated computation of $\mathbf{X}'$ can be divided into two steps. \textbf{Step} \ding{202}: TA broadcasts $\mathbf{P}$ to all users, then horizontally splits the mask $\mathbf{Q}$ into $\{\mathbf{Q}_i \in \mathbb{R}^{n_i \times n}|1\le i \le k\}$, and sends $\mathbf{Q}_i$ to user-$i$. \textbf{Step} \ding{203}: Users compute $\mathbf{PX_iQ_i}$, the CSP runs a secure aggregation to get $\sum_i \mathbf{P} \mathbf{X}_i \mathbf{Q}_i$. The secure aggregation conceals the intermediate results (\textit{i.e.,} $\mathbf{P}\mathbf{X}_i\mathbf{Q}_i$), and guarantees that CSP only learns $\mathbf{X}'$.

\parab{Communication Efficient Mask Delivery:} We observe that directly transferring $\mathbf{P,Q}$ has $O(m^2+n^2)$ communication complexity. Based on \Cref{alg:random_orthogonal_block}, we propose to reduce the communication complexity through transferring only one random number or small blocks of the mask. More specifically, the TA only broadcast a random seed $r_p$ for mask $\mathbf{P}$ since Gram-Schmidt is a deterministic algorithm that yields the same orthogonal matrix as long as the input matrices are the same, thus the users can generate $\mathbf{P}$ locally using the same random seed. TA only sends the sliced matrix blocks for mask $\mathbf{Q}$ and the zeros are omitted during the transmission. In summary, communication complexity of transferring $\mathbf{P,Q}$ are reduced to $O(1)$ and $O(b^2 \frac{n}{b})=O(n)$.

\parab{Efficient Data Masking via Block Matrix Multiplication:} We observe that our data masking process (\ie, computing $\mathbf{PXQ}$) has cubic complexity (\ie, $O(m^2n + mn^2)$) which brings large computation overhead in large-scale applications. To reduce the complexity, we adopt block matrix multiplications since $\mathbf{P,Q}$ are sparse matrices and consist of blocks. A concrete example is presented in \Cref{eq:example data masking}, where the zeros are omitted in the computation. After adopting the block matrix multiplication, the data masking complexity is reduced from cubic complexity to $O(\frac{m}{b}*b^2*n + \frac{n}{b}*b^2*m)=O(mn)$.

\vspace{-2mm}
\begin{equation} \label{eq:example data masking}
	\begin{bmatrix}
		\mathbf{P_1} & 0 & 0 \\
		0 & \mathbf{P_2}  & 0 \\ 
		0 & 0 & \mathbf{P_3} \\
	\end{bmatrix}
	\begin{bmatrix}
		\mathbf{X_1} \\
		\mathbf{X_2} \\ 
		\mathbf{X_3} \\
	\end{bmatrix}
	=
	\begin{bmatrix}
		\mathbf{P_1X_1} \\
		\mathbf{P_2X_2} \\ 
		\mathbf{P_3X_3} \\
	\end{bmatrix}
\end{equation}
\vspace{-2mm}

\parab{Memory Efficient Mini-batch Secure Aggregation:} We observe that secure aggregation (SecAgg) directly processes the whole data matrix (\ie, $\mathbf{X}'_i=\mathbf{P}\mathbf{X}_i\mathbf{Q}_i$), and it will bring significant memory burden to the server and users in \solution since $\mathbf{X}'_i$ is a large matrix. We propose to split $\mathbf{X}'_i$ into batches and only process one batch of data in each round of SecAgg. Mini-batch SecAgg works because the aggregations of different rows or columns of $\mathbf{X}'_i$ are independent.

\subsection{Removing the Masks} \label{sec:remove_masks}

Intuitively, the masks in the final results could be removed by each user locally if the CSP broadcast $\mathbf{U}'$ and $\mathbf{V}'^T$, \ie $\mathbf{U}=\mathbf{P}^T\mathbf{U}'$ and $\mathbf{V'}_i^T=\mathbf{Q}_i^T\mathbf{V'}^T$. However, $\mathbf{V}'^T$ contains masked eigenvectors of all users, sending $\mathbf{V}'^T$ from CSP to users may bring privacy issues because users hold more information than CSP, \eg, $\mathbf{Q}_i$. Thus we propose a federated computation process to recover $\mathbf{V}'^T$. For $\mathbf{U}'$, users can remove the mask locally because $\mathbf{U}$ is the defined as the shared result in federated SVD (\ie, \S\ref{sec:federated_svd}).

During the recovery of $\mathbf{V}'^T$, we want to guarantee the confidentiality of both $\mathbf{Q}_i^T$ and $\mathbf{V}'$, \textit{i.e.,} the users cannot get the whole $\mathbf{V}'$ matrix and the CSP cannot learn $\mathbf{Q}_i^T$.

Our solution is first masking $\mathbf{Q}_i^T$ using another random matrix $\mathbf{R}_i \in \mathbb{R}^{n_i \times n_i}$ according to \Cref{eq:recover_v}. Then user $i$ sends the $[\mathbf{Q_i^T}]^R$ (\textit{i.e.,} the masked $\mathbf{Q}_i^T$) to the CSP, which will subsequently compute $[\mathbf{V}_i^T]^R$ and send $[\mathbf{V}_i^T]^R$ back to user $i$. Then user $i$ can remove the random mask according to \Cref{eq:recover_v} and get the final result (\textit{i.e.,} $\mathbf{V}_i^T$).

\vspace{-3mm}
\begin{equation} \label{eq:recover_v}
	[\mathbf{Q_i^T}]^R = \mathbf{Q_i^T R_i},[\mathbf{V_i^T}]^R = \mathbf{V'^T} [\mathbf{Q_i^T}]^R,\mathbf{V_i^T} = [\mathbf{V_i^T}]^R \mathbf{R_i^{-1}}
\end{equation}
\vspace{-3mm}

It is worth noting that $\mathbf{V}^T_i$ also could be recovered through $\mathbf{V}^T_i=\mathbf{\Sigma}^{-1}_n\mathbf{U}_n^T\mathbf{X}_i$, where $\mathbf{\Sigma}^{-1}_n$ and $\mathbf{U}_n^T$ mean the first $n$ rows of $\mathbf{\Sigma}^{-1}$ and $\mathbf{U}^T$. However, this method only works when $m>=n$. When $m<n$, we can only recover the first $m$ rows of $\mathbf{V}^T_i$ but not the full matrix. Thus this method is not a general solution.

\parab{Efficient Recovery of $\mathbf{V}^T$ via Block Matrix Computation:} We observe that although $\mathbf{Q}_i^T$ is a sparse matrix consisting of blocks according to \Cref{alg:random_orthogonal_block}. However, the computing and transferring $\mathbf{Q}_i^T \mathbf{R}_i$ is costly since $\mathbf{R}_i$ is a dense random matrix. The computation and communication of $\mathbf{Q}_i^T \mathbf{R}_i$ has $O(\frac{n_i}{b}b^2n_i)=O(n_i^2)$ complexity. To improve efficiency, our solution generates $\mathbf{R}_i$ through putting a bunch of square random matrix diagonally and the size of each small random matrix is decided by $\mathbf{Q}_i^T$, such that $\mathbf{Q}_i^T \mathbf{R}_i$ is still a sparse matrix consists of blocks, \Cref{eq:example1} shows an example. The complexity is reduced from $O(n_i^2)$ to $O(\frac{n_i}{b}b^3)=O(n_i)$. Moreover, since $\mathbf{R}_i$ is consist of block matrices, the complexity of computing its inverse (\ie, $\mathbf{R_i^{-1}}$) also reduces from $O(n_i^3)$ to $O(n_i)$.

\vspace{-3mm}
\begin{equation} \label{eq:example1}
	\mathbf{Q_i^T R_i}=\begin{bmatrix}
		0 & 0 \\
		\mathbf{Q_{i,1}^T} & 0 \\
		0 & \mathbf{Q_{i,2}^T} \\ 
		0 & 0 & \\
	\end{bmatrix}
	\begin{bmatrix}
		\mathbf{R_i^1} & 0 \\
		0 & \mathbf{R_i^2} \\
	\end{bmatrix}
	=
	\begin{bmatrix}
		0 & 0 \\
		\mathbf{Q_{i,1}^T}\mathbf{R_i^1} & 0 \\
		0 & \mathbf{Q_{i,2}^T}\mathbf{R_i^2} \\ 
		0 & 0 & \\
	\end{bmatrix}
\end{equation}

\subsection{Disk Offloading via Data Access Patterns}
\label{sec:Offloading Memory Burden to Disks}

\parab{Observation:} Dealing with large-scale matrices usually requires large hardware memory. For example, a 100K $\times$ 1M 64-bit matrix requires approximately 745GB RAM, making the memory space for computation very limited. A standard solution is offloading part of the memory storage to disks using swap memory and reloading the data when needed. The operating system (OS) will automatically schedule the disk offloading. However, naively following the OS scheduling with no specific design for our algorithm is inefficient.

\parab{Solution:} We propose an advanced disk offloading strategy according to the data access pattern for \solution. 1) Offloading strategy for $\mathbf{P,Q}$. According to our observation, $\mathbf{P,Q}$ are used twice in the computation when applying and removing the masks. Hence, on the client-side, we immediately save the blocks of $\mathbf{P,Q}$ to disk when they are generated or received from TA. When applying or removing the masks, we load and use $\mathbf{P,Q}$ block by block (\ie, sequentially). And each block will be removed from the memory when its computation finishes; 2) Offloading strategy for large dense data matrices (\eg, $\mathbf{X},\mathbf{PXQ},\mathbf{U},\mathbf{V}^T$). We store the large data matrices in disk and leave a file map in memory. The file map will automatically read the needed matrix components. However, direct adoption of file maps may bring severe efficiency issues. The file map uses consistent storage on the disk and the matrix is stored by rows by default. If the manner we access the matrix conflicts with the storage manner (\eg, access by column), the efficiency will be very low. Thus we have optimized the implementation such that all the file-map matrices are stored adaptively according to the access pattern. The evaluation results show that our advanced disk offloading strategy reduces the time consumption by 44.7\% compared with using swap memory scheduled by OS. Detail could be found in \S\ref{sec:optimization}.

\subsection{Privacy Analysis} \label{sec:privacy_proof}

In this section, we analyze the confidentiality of \solution. We consider the TA to be a fully trusted entity, while the CSP and users are semi-honest parties. This means, both the CSP and users will honestly follow the pre-designed protocols but also attempt to infer private data. We also assume there is no collusion between the CSP and the users.

\parab{CSP cannot reveal the original matrix:} According to \Cref{fig:framework}, the total messages received by the CSP are $\mathbf{X}'=\sum_i [\mathbf{X}'_i]^R$ and $[\mathbf{Q}_i^T]^R$. 1) According to the prior work \cite{bonawitz2016practical}, the CSP only learns the aggregated results $\mathbf{X}'$, no information is leaked during the secure aggregation; 2) In \Cref{theorem:privacy_proof}, we show that there is an infinite number of raw data that could be masked into the same matrix. If the CSP has no prior knowledge about the data distribution, it can never recover the true data because the true data is not identifiable. Alternatively, the CSP can empirically choose data distribution as prior knowledge and perform attacks \cite{ica_attack} on the masked data. However, the attack experiments in \S\ref{sec:attack} show that the attack fails in getting valid information if we set the hyper-parameter properly; 3) According to the prior work \cite{zhang2019secure}, the masked data $[\mathbf{Q}_i^T]^R$ cannot be computationally distinguished from a random matrix, thus leaks no information.

In conclusion, \solution is secure against CSP which cannot reveal the raw data.

\begin{theorem} \label{theorem:privacy_proof}
	Given a masked data $\mathbf{X}'=\mathbf{P}_1 \mathbf{X}_1 \mathbf{Q}_1$, there are infinite number of raw data $\mathbf{X}_2$ that can be masked into $\mathbf{X}'$, \ie, $\mathbf{P}_2 \mathbf{X}_2 \mathbf{Q}_2 =\mathbf{P}_1 \mathbf{X}_1 \mathbf{Q}_1=\mathbf{X}'$. 
\end{theorem}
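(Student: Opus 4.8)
The plan is to prove the statement constructively and then count the distinct solutions. The cleanest route is to fix the observed masked matrix $\mathbf{X}'$, pick \emph{any} orthogonal matrices $\mathbf{P}_2 \in \mathbb{R}^{m\times m}$ and $\mathbf{Q}_2 \in \mathbb{R}^{n\times n}$, and \emph{define} the candidate raw matrix by inverting the transformation, $\mathbf{X}_2 := \mathbf{P}_2^T \mathbf{X}' \mathbf{Q}_2^T$. Using exactly the orthogonality identities invoked in \Cref{eq:pq_orth}, namely $\mathbf{P}_2 \mathbf{P}_2^T = \mathbf{I}$ and $\mathbf{Q}_2^T \mathbf{Q}_2 = \mathbf{I}$, one immediately gets $\mathbf{P}_2 \mathbf{X}_2 \mathbf{Q}_2 = \mathbf{X}'$, so every choice of orthogonal $(\mathbf{P}_2,\mathbf{Q}_2)$ yields a feasible raw matrix. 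The only real content left is to certify that infinitely many of the resulting $\mathbf{X}_2$ are genuinely distinct.

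To count distinct solutions I would first handle the generic case by restricting to a one-parameter family. Fix $\mathbf{Q}_2 = \mathbf{I}$ and let $\mathbf{P}_2(\theta)$ range over a continuous one-parameter subgroup of $O(m)$ (for instance the Givens rotations in a fixed coordinate plane, available whenever $m \ge 2$), giving $\mathbf{X}_2(\theta) = \mathbf{P}_2(\theta)^T \mathbf{X}'$. Two parameters produce the same matrix only if $\big(\mathbf{P}_2(\theta) - \mathbf{P}_2(\theta')\big)^T \mathbf{X}' = \mathbf{0}$; when $\mathbf{X}'$ has full row rank this forces $\mathbf{P}_2(\theta) = \mathbf{P}_2(\theta')$, so $\theta \mapsto \mathbf{X}_2(\theta)$ is injective and yields a continuum of distinct raw matrices, all masking to $\mathbf{X}'$.

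The step I expect to be the main obstacle is the degenerate, rank-deficient regime, where this naive injectivity argument can collapse because the two-sided stabilizer of $\mathbf{X}'$ under $O(m)\times O(n)$ becomes positive-dimensional and can absorb the variation in $\mathbf{P}_2$. To settle all cases uniformly I would argue at the level of the orbit $\mathcal{O} = \{\,\mathbf{A}^T \mathbf{X}' \mathbf{B}^T : \mathbf{A}\in O(m),\ \mathbf{B}\in O(n)\,\}$: by \Cref{theorem:1} this orbit is exactly the set of matrices sharing the singular values of $\mathbf{X}'$, and as a homogeneous space it is a single point only when every orthogonal pair fixes $\mathbf{X}'$, which happens precisely for $\mathbf{X}' = \mathbf{0}$. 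For any nonzero $\mathbf{X}'$ (with $\max(m,n)\ge 2$, as always holds in applications) the orbit is a positive-dimensional compact manifold and hence contains uncountably many points, each a valid $\mathbf{X}_2$. I would close by stating the privacy reading: the preimage of $\mathbf{X}'$ under masking is infinite, so the true raw data is not identifiable from $\mathbf{X}'$ alone, which is exactly the unidentifiability the surrounding privacy analysis depends on.
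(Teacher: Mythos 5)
Your proof is correct, and while it exploits the same underlying mechanism as the paper's proof --- producing alternative preimages via two-sided orthogonal transformations --- the route is genuinely different and in one respect stronger. The paper inserts $\mathbf{R}_1^T\mathbf{R}_1$ and $\mathbf{R}_2\mathbf{R}_2^T$ into the SVD of $\mathbf{X}_1$, obtaining the family $\mathbf{X}_2=\mathbf{R}_1\mathbf{\Sigma}\mathbf{R}_2$, $\mathbf{P}_2=\mathbf{P}_1\mathbf{U}\mathbf{R}_1^T$, $\mathbf{Q}_2=\mathbf{R}_2^T\mathbf{V}^T\mathbf{Q}_1$, and then concludes directly from the infinitude of the orthogonal group that there are infinitely many $\mathbf{X}_2$. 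You instead invert the masking outright, setting $\mathbf{X}_2=\mathbf{P}_2^T\mathbf{X}'\mathbf{Q}_2^T$ for arbitrary orthogonal $\mathbf{P}_2,\mathbf{Q}_2$; this needs no access to the SVD of $\mathbf{X}_1$ and sweeps out the same orbit (all matrices sharing the singular values of $\mathbf{X}'$). More importantly, you supply the step the paper elides: an infinite parameter space does not by itself yield infinitely many \emph{distinct} $\mathbf{X}_2$, since different $(\mathbf{P}_2,\mathbf{Q}_2)$ can collide --- and all of them do when $\mathbf{X}'=\mathbf{0}$. Your injectivity argument in the full-row-rank case and the orbit-dimension argument in the degenerate case close exactly that gap, at the cost of the explicit (and unavoidable) hypotheses $\mathbf{X}'\neq\mathbf{0}$ and $\max(m,n)\ge 2$, which the paper tacitly assumes. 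One small nit: the identification of the orbit with the full set of matrices having the same singular values requires both inclusions (invariance from \Cref{theorem:1} for one direction, an SVD of the target matrix for the other), but your argument only uses the forward inclusion together with positive-dimensionality, so this does not affect correctness.
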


\begin{proof} Given two random orthogonal matrix $\mathbf{R}_1 \in \mathbb{R}^{m \times m}$ and $\mathbf{R}_2 \in \mathbb{R}^{n \times n}$, we can rewrite $\mathbf{X}'$ into 
\begin{equation}
\begin{aligned}
	\mathbf{X'} & = \mathbf{P_1X_1Q_1=P_1U \Sigma V^TQ_1=P_1U(R_1^TR_1) \Sigma (R_2R_2^T)V^TQ_1} \\
	&=\mathbf{(P_1UR_1^T)(R_1\Sigma R_2)(R_2^TV^TQ_1)} \\
\end{aligned}
\end{equation}

Let $\mathbf{X}_2=\mathbf{R}_1\mathbf{\Sigma}\mathbf{R}_2$, $\mathbf{P}_2=\mathbf{P}_1\mathbf{U}\mathbf{R}_1^T$, $\mathbf{Q}_2=\mathbf{R}_2^T\mathbf{V}^T\mathbf{Q}_2$, then we get $\mathbf{P}_2\mathbf{X}_2\mathbf{Q}_2=\mathbf{P}_1\mathbf{X}_1\mathbf{Q}_1=\mathbf{X}'$. $\mathbf{R}_1,\mathbf{R}_2$ are random orthogonal matrices and the number of orthogonal matrices with certain size is infinite in real number field, thus we have infinite number of $\mathbf{X}_2$ that also can be masked into $\mathbf{X}'$ and the CSP cannot identify the real data.
\end{proof}

\vspace{-2mm}

\parab{The users can only learn the final results:} According to \Cref{fig:framework}, the user i receives: $\mathbf{P}, \mathbf{Q}_i, \mathbf{U}', \mathbf{\Sigma}, [\mathbf{V}_i^T]^R$, and the valid information are $\mathbf{U}, \mathbf{\Sigma}, \mathbf{V}_i$, which are exactly the final results of the federated SVD problem defined in \S\ref{sec:federated_svd}. Thus each user only learn its final results and receives nothing about other users' private data. Additionally, \solution is secure against collusion between the users because a group of cooperated users could be treated as a single user who owns more local data, and they cannot obtain the privacy of other users outside the group.

\parab{The TA learns nothing:} Since TA receives nothing in \Cref{fig:framework} and remains offline after initialization, it learns nothing in the algorithm.

In summary, \solution is secure against CSP, TA receives nothing during the computation, and the users only get their final results. \solution is highly confidential.

\section{Applications Based on FedSVD}

Based on \solution, we propose three applications: principal component analysis (PCA), linear regression (LR), and latent semantic analysis (LSA). All these applications have the same first three steps with \solution and only differ at the last step. Tailored optimizations are also made for each application to further improve efficiency.

\vspace{-3mm}
\begin{figure}[h!]
	\centering
	\includegraphics[scale=0.4]{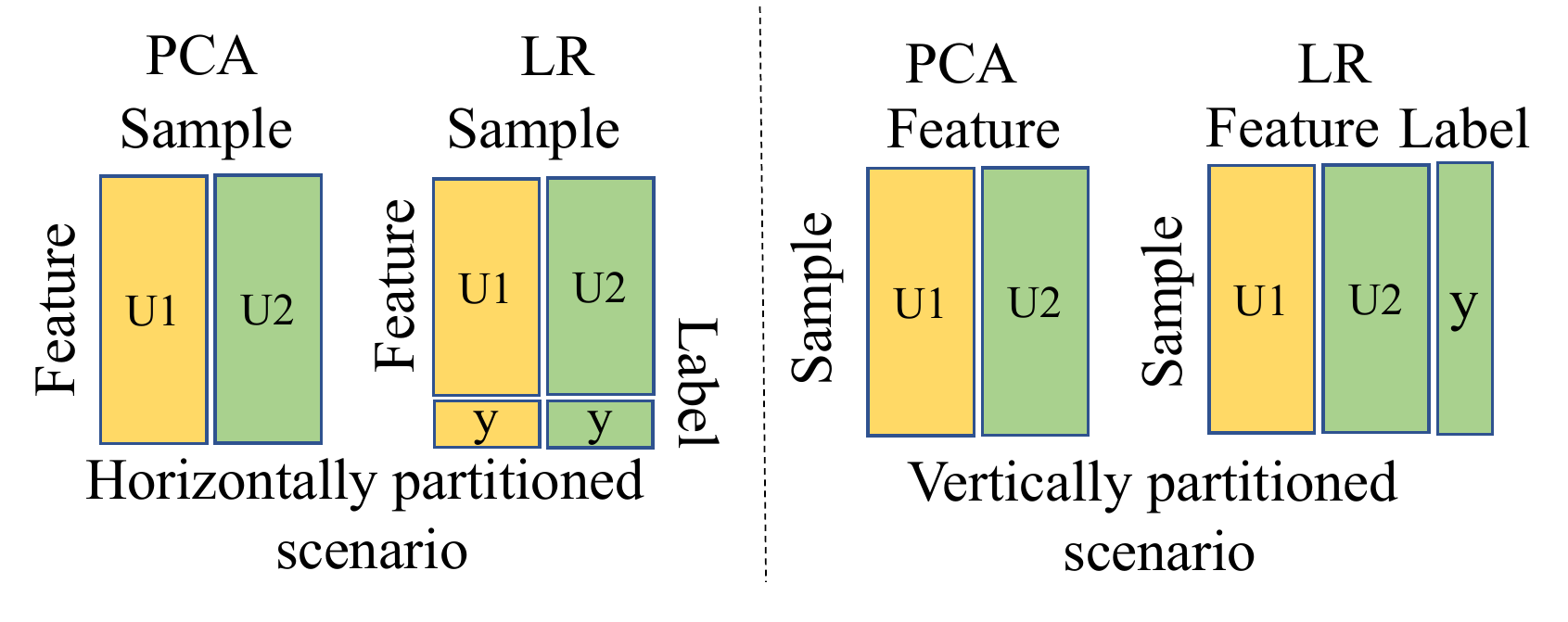}
	\caption{Federated PCA and LR under different data settings.}
	\label{fig:application_pca_lr}
\end{figure}
\vspace{0.5mm}

\parab{PCA in horizontally partitioned scenario:} PCA in federated learning setting typically has two data partition schemas,~\ie, horizontally and vertically, which are illustrated in \Cref{fig:application_pca_lr}. In this paper, we consider the horizontal federated PCA since it is the most common data setting in \highlight{medical and biometric} studies in which multiply institutions have the same feature on different samples. Given a normalized matrix $\mathbf{X}$, PCA decomposes it into $\mathbf{X} = \mathbf{U}_r {\Sigma}_r \mathbf{V}^T_r$, where $r$ is the number of principal components in PCA, $\mathbf{U}_r \in \mathbb{R}^{m \times r}$ and $\mathbf{V}^T_r \in \mathbb{R}^{r \times n}$ are the top-$r$ singular vectors with largest singular values. Such decomposition is also called truncated SVD. Considering PCA in horizontally partitioned scenario, the PCA result for user i is $\mathbf{U}_r^T\mathbf{X}_i \in \mathbb{R}^{r \times n_i}$. Accordingly, in \solution-based PCA, CSP only calculates and broadcasts the masked $\mathbf{U}'_r$ to all users and ignores the computation and transmission of $\mathbf{\Sigma},\mathbf{V}'^T$ to improve efficiency.

\parab{LR in vertically partitioned scenario:} LR in federated learning setting also has two data partition schemas,~\ie, horizontally and vertically, which are illustrated in \Cref{fig:application_pca_lr}. In this paper, we consider the vertical federated LR since it is the most common scenario of federated risk management and marking in the real-world applications \cite{yang2019federated}, in which different institutions hold different features on the same samples. Given a data matrix $\mathbf{X}=[\mathbf{X}_0;b] \in \mathbb{R}^{m \times n}$ and label $\mathbf{y}$, where $\mathbf{b}$ is the bias term, LR try to find a vector $\mathbf{w} \in \mathbb{R}^{n}$ such that $\mathbf{y=Xw}$. $\mathbf{w}$ could be solved through SVD on $\mathbf{X}$ and $\mathbf{w}=\mathbf{V}\mathbf{\Sigma}^{-1}\mathbf{U}^T\mathbf{y}$. In \solution-based LR, the user add mask to label $\mathbf{y}$ through $\mathbf{y}'=\mathbf{Py}$, then upload the masked label to CSP, which will subsequently compute $\mathbf{w}'=\mathbf{V}'\mathbf{\Sigma}^{-1}(\mathbf{U}')^T\mathbf{y}'=\mathbf{Q}^T \mathbf{V} \mathbf{\Sigma}^{-1} \mathbf{U}^T \mathbf{y}=\mathbf{Q}^T\mathbf{w}$. Then CSP broadcast the masked parameter matrix $\mathbf{w}'$ to all users, and each user can get the local parameters through $\mathbf{w}_i=\mathbf{Q}_i\mathbf{w}'$, where $\mathbf{w}_i \in \mathbb{R}^{n_i}$. In our LR design, the CSP will only broadcast the masked parameters and the $\mathbf{U}',\mathbf{\Sigma}$ and $\mathbf{V}'^T$ are not transmitted to improve the communication efficiency.

\parab{LSA:} Federated LSA is not sensitive to the data partition schemas since there is no clear definition of sample and feature in LSA. Briefly, LSA decomposes a data matrix $\mathbf{X} \in \mathbb{R}^{m \times n}$ (\eg, word-document matrix) into $\mathbf{X} = \mathbf{U}_r \mathbf{\Sigma}_r \mathbf{V}^T_r$, where $r$ is the number of embedding feature in LSA and $\mathbf{\Sigma}_r$ are the top-$r$ singular values. After the decomposition, $\mathbf{U}_r$ and $\mathbf{V}^T_r$ are treated as embedding features and used in the subsequent tasks, \eg, computing the similarity of different documents in NLP. Accordingly, in \solution-based LSA, the CSP and users run the same protocol of recovering $\mathbf{U}'$ and $\mathbf{V}'^T$ to recover their first $r$ vectors with the largest singular values, and the vectors outside $r$ are ignored to improve the efficiency.

\section{Experiments}

In this section, we provide a comprehensive evaluation of \solution regarding the lossless and efficiency on SVD task (\S\ref{sec:evaluation_svd}) and three applications (\S\ref{sec:evaluation_app}). Then we present the attack experiments in \S\ref{sec:attack}. Lastly, we show the effectiveness of the proposed system optimizations in \S\ref{sec:optimization}.

\subsection{Experiment Settings}

We have used five datasets in our experiments: MNIST \cite{lecun1998gradient}, Wine \cite{Dua:2019}, MovieLens-100K \cite{ml100k}, MovieLens-25M \cite{ml100k}, and synthetic data \cite{FedPCA}. We compare \solution with three state-of-the-art models: WDA-PCA \cite{WDA} which is a distributed rank-$k$ PCA method, FedPCA \cite{FedPCA} which is a federated $(\epsilon,\delta)$-differentially private PCA method, and PPD-SVD \cite{PPD-SVD} which is a HE-based distributed SVD method. In particular, on LR application, we compare \solution with two well-known federated LR solution: FATE \cite{liu2021fate} and SecureML \cite{Secureml}. We set $b=1000$ in \solution and $\epsilon=0.1,\delta=0.1$ for DP-based method. By default, following the prior work~\cite{han2009privacy,yang2019federated,Secureml}, we uniformly partition the data on two users, and partitioning data to more users will not impact our evaluations. Due to the space limitation, we put the detailed experiment setting in the \Cref{appendix:dataset}.

\subsection{Evaluation on SVD} \label{sec:evaluation_svd}

\parab{Lossless:} We have proved in Theorem \ref{theorem:1} that the masking-based protection in \solution is lossless. Here we would like to use more experimental results to show that the precision of \solution is lossless in the implementation.

We compare the precision of \solution with FedPCA on SVD tasks. The precision of SVD is measured by calculating the distance of singular vectors \cite{FedPCA} between the proposed methods and the standalone SVD. We use root-mean-square-error (RMSE) as the distance metric. \Cref{tab:lossless_evaluation} shows the results. \solution has about 10 orders of magnitude smaller error compared with DP-based solution.

To give a more straightforward understanding, we also evaluate the reconstruction error of \solution, \textit{i.e.,} distance to the raw data: $||\mathbf{X}-\mathbf{U} \mathbf{\Sigma} \mathbf{V}^T||$. Using mean absolute percentage error as the metric, \solution's reconstruction error is only 0.000001\% of the raw data. It is worth noting that \solution's tiny deviation in the experiment is brought by the floating number representation in computers. Theoretically, as proved in Theorem \ref{theorem:1}, \solution is lossless.

\vspace{-3mm}
\begin{figure}[h!]
	\small
	\centering
	\subfigure[Comparing to HE-based method on SVD task and billion scale data.]{
		\centering
		\label{fig:large_scale_svd_time_with_ppdsvd}
		\includegraphics[width=0.44\linewidth]{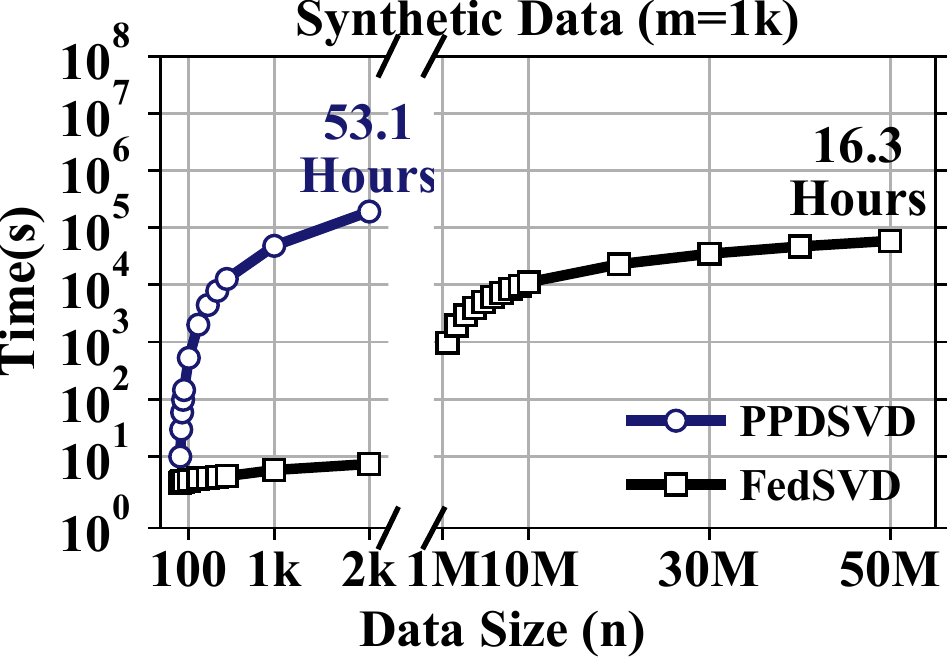}
	}
	\hfil
	\subfigure[Comparing to HE-based method on communication size.]{
		\centering
		\label{fig:small_scale_svd_comm}
		\includegraphics[width=0.44\linewidth]{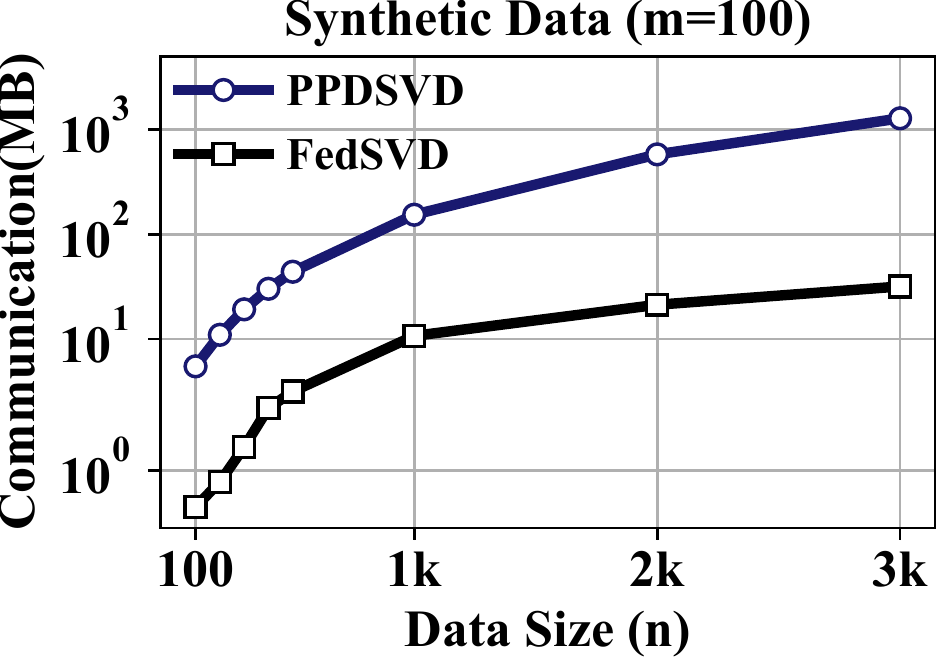}
	}
	\vspace{-2mm}
	\subfigure[Impact of network bandwidth on SVD efficiency.]{
		\centering
		\label{fig:time_to_bandwidth_svd}
		\includegraphics[width=0.46\linewidth]{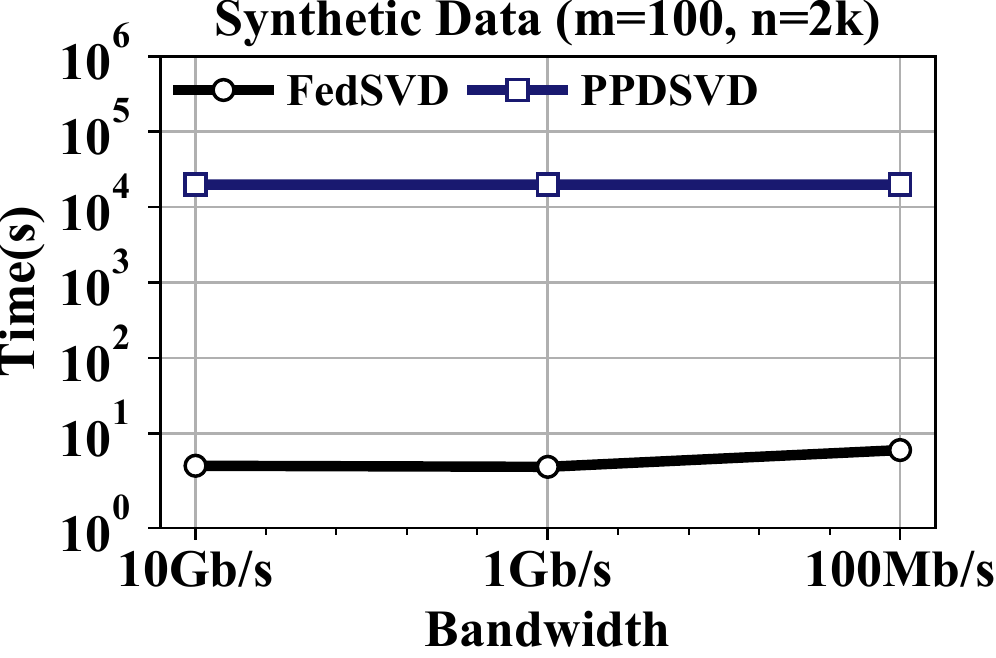}
	}
	\hfil
	\subfigure[Impact of network latency on SVD efficiency.]{
		\centering
		\label{fig:time_to_rtt_svd}
		\includegraphics[width=0.44\linewidth]{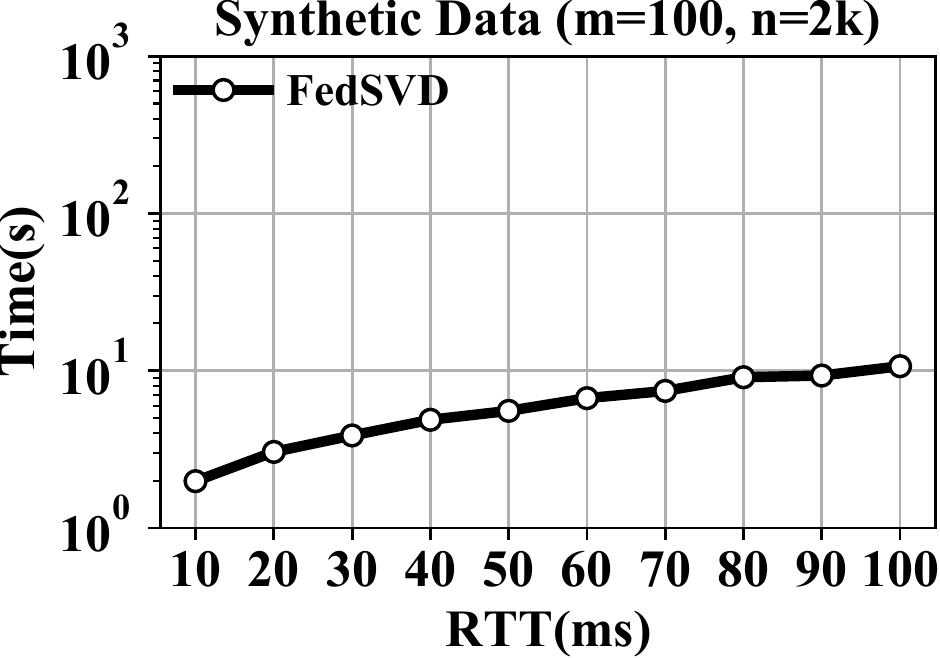}
	}
	\vspace{-2mm}
	\subfigure[Impact of block size on \solution's efficiency.]{
		\centering
		\label{fig:block_size}
		\includegraphics[width=0.44\linewidth]{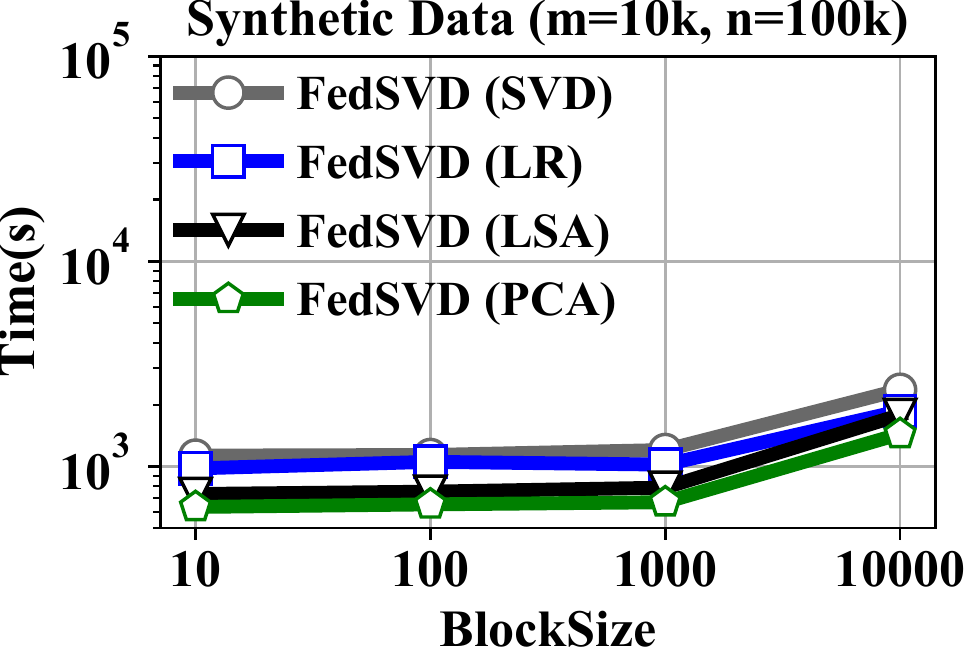}
	}
	\hfil
	\subfigure[Communication size of \solution under different \# of users and data size.]{
		\centering
		\label{fig:client_to_communication_svd}
		\includegraphics[width=0.43\linewidth]{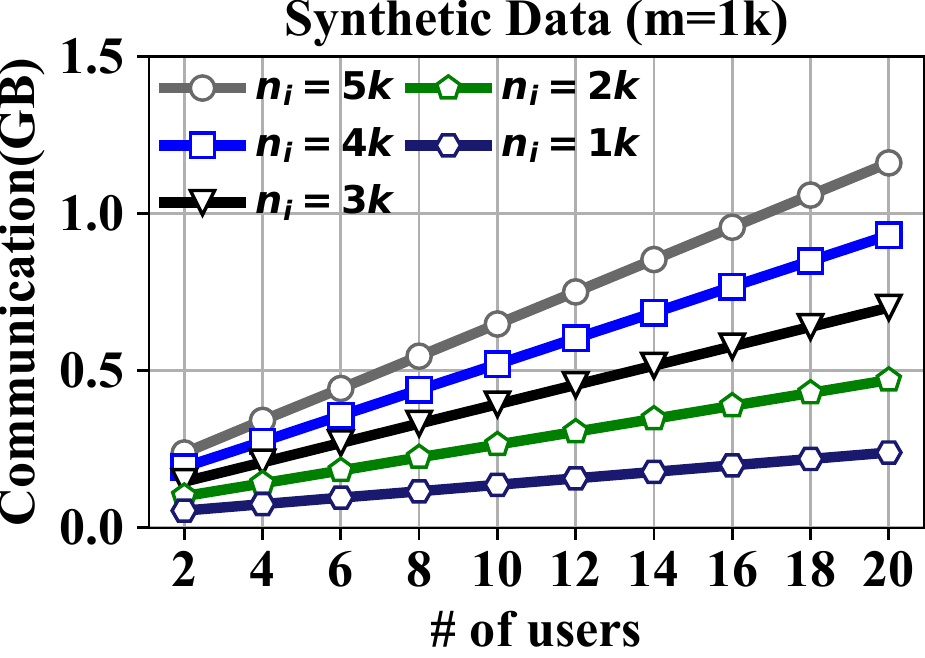}
	}
	\caption{Evaluation on SVD task.}
	\label{fig:svd_evaluation}
\end{figure}

\begin{table*}[h!]
	\centering
	\setlength{\tabcolsep}{0.4em}
	\renewcommand\arraystretch{0.85}
	\small
	\caption{Lossless evaluation on SVD task and three applications.}
	\begin{tabular}{|c|c|c|c|c|c|c|c|c|c|}
		\hline
		& \multicolumn{2}{c|}{SVD} & \multicolumn{3}{c|}{PCA / LSA Applications} & \multicolumn{4}{c|}{LR Application} \\
		\cline{2-10}
		\multirow{2}{*}{\tabincell{c}{Datasets}} & \tabincell{c}{FedPCA} & \solution & \tabincell{c}{FedPCA} & WDA & \solution & \tabincell{c}{SGD (10 Epoch) \\ (FATE \& SML)} & \tabincell{c}{SGD (100 Epoch) \\ (FATE \& SML)} & \tabincell{c}{SGD (1000 Epoch) \\ (FATE \& SML)} & \solution \\\hline
		Wine & $3.25*10^{-1}$ & $5.51*10^{-10}$ & 1.68 & 1.69 & $1.37*10^{-10}$ & 1.04 & 0.767 & 0.666 & 0.539 \\\hline
		MNIST & $9.37*10^{-2}$ &  $1.99*10^{-10}$ & $5.34*10^{-2}$ & $5.97*10^{-3}$ & $2.79*10^{-14}$ & 48.7 & 5.53 & 3.78 & 3.19 \\\hline
		ML100K & $7.95*10^{-2}$ & $1.45*10^{-13}$ & $4.45$ & $6.02*10^{-1}$ & $1.11*10^{-14}$ & 127 & 53.8 & 45.1 & 43.9 \\\hline
		Synthetic & $1.79*10^{-1}$ & $9.03*10^{-12}$ & $4.45$ & $9.13*10^{-4}$ & $9.09*10^{-15}$ & 1.71 & 0.974 & 0.849 & 0.813 \\\hline
	\end{tabular}
	\label{tab:lossless_evaluation}
	\vspace{-4mm}
\end{table*}

\begin{figure*}
  \begin{minipage}[h]{0.74\linewidth}
    \small
	\centering 
	\subfigure[Comparing \solution with FATE and SecureML on billion-scale data.]{
		\label{fig:large_scale_lr}
		\includegraphics[scale=0.4]{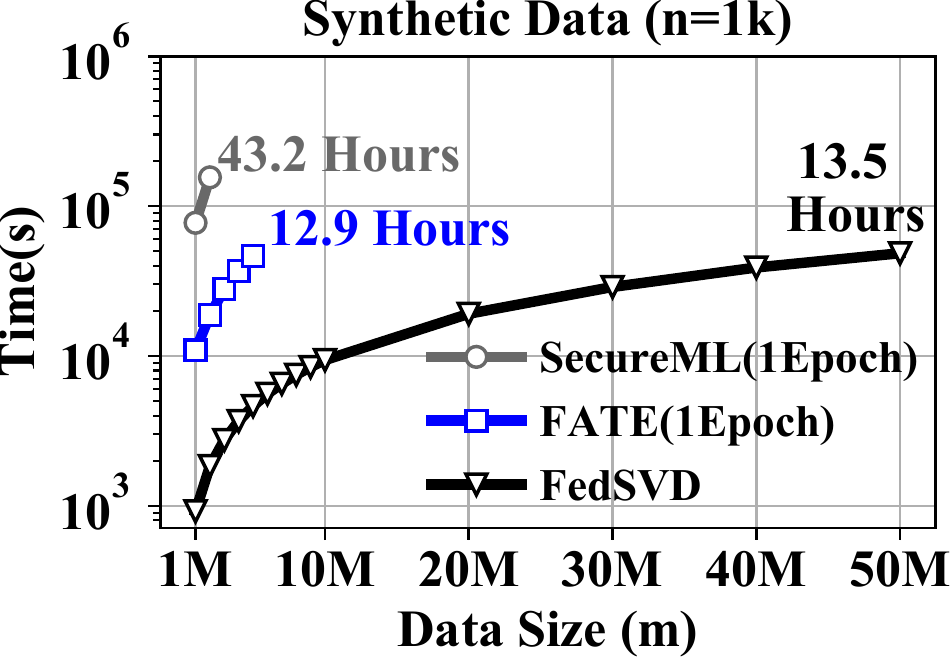}
	}
	\hfil
	\subfigure[Impact of network bandwidth on LR efficiency.]{
		\label{fig:time_to_bandwidth_lr}
		\includegraphics[scale=0.4]{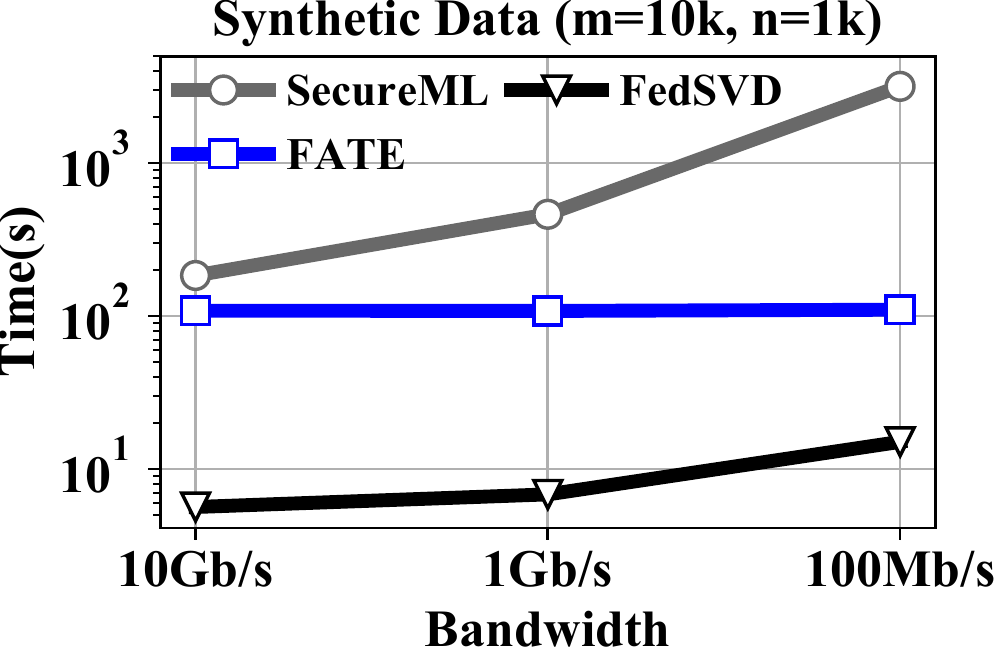}
	}
	\hfil
	\subfigure[Impact of network latency on LR efficiency.]{
		\label{fig:time_to_rtt_lr}
		\includegraphics[scale=0.4]{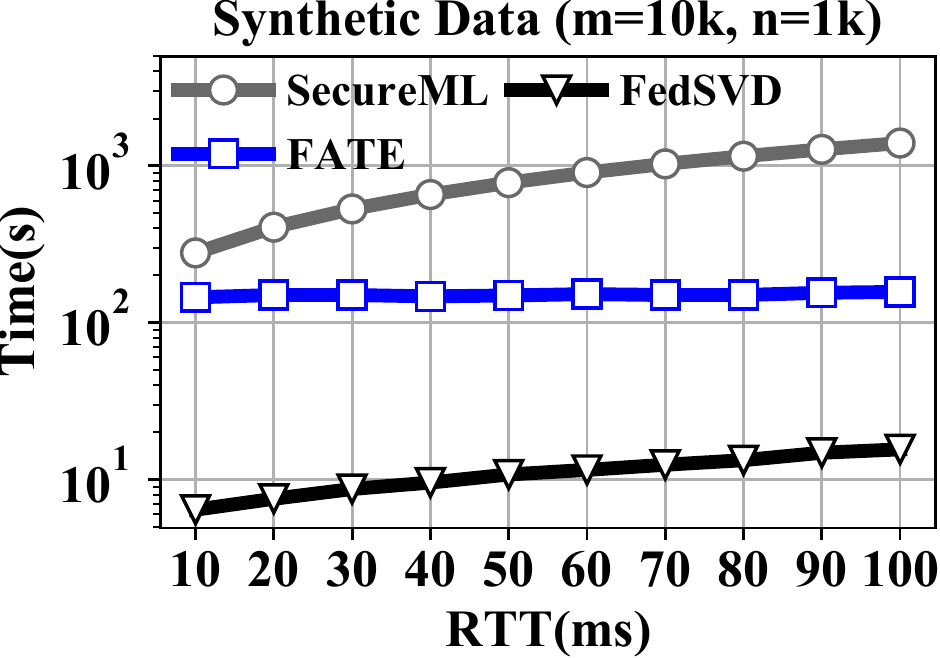}
	}
    \caption{Evaluation on LR Application.}
  \end{minipage}
  \begin{minipage}[h]{0.24\linewidth}
    \centering
    \includegraphics[scale=0.4]{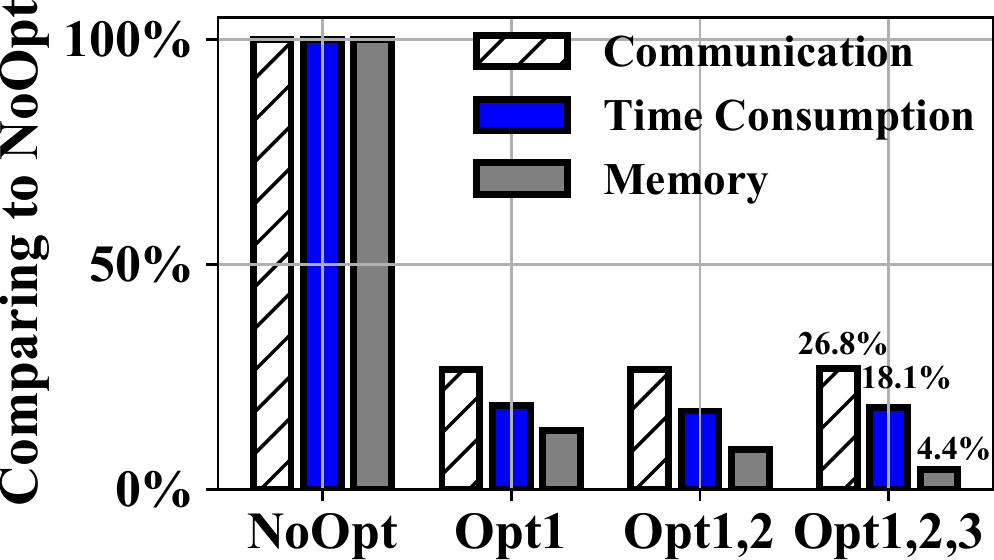}
    \vspace{+3mm}
    \caption{Effectiveness of the Proposed Optimizations.}
    \label{fig:optimizations}
  \end{minipage}
\end{figure*}

\parab{Time Consumption:} \Cref{fig:large_scale_svd_time_with_ppdsvd} shows the time consumption of HE-based SVD (\ie, PPDSVD) and \solution on large-scale data. Specifically, we use synthetic data matrix $\mathbf{X} \in \mathbb{R}^{m \times n}$, fix $m=1$K, and vary $n$ from 10 to 50 million. The experiment of PPDSVD stops at $n=2$K because it takes too much time to further increase $n$. PPDSVD takes 53.1 hours to factorize a 1K $\times$ 2K matrix which is $10000\times$ slower than \solution. Meanwhile, we also observe that the time consumption of PPDSVD increases quadratically with $n$ when fixing $m$, while \solution increase linearly. Approximately, PPDSVD needs more than 15 years to factorize a 1K $\times$ 100K matrix, \ie, million-scale elements. \solution only needs 16.3 hours to factorize a 1K $\times$ 50M matrix, which contains 50 billion elements.

\parab{Communication:} \solution also has more than 10 times smaller communication size compared with PPDSVD, which is presented in \Cref{fig:small_scale_svd_comm}. \Cref{fig:time_to_bandwidth_svd} and \Cref{fig:time_to_rtt_svd} show the efficiency when we change networking bandwidth and latency, and \solution works well given different networking conditions. Figure \ref{fig:client_to_communication_svd} shows the amount of communication data per user when we change the data size of each user (\ie, $n_i$) and the number of users. Each user's communication size linearly increases with the size of local data.

\parab{Hyper-parameter (Block Size):} Block size is the only hyper-parameter in our solution and we present the system efficiency using different block size in \Cref{fig:block_size}. \solution's time consumption slowly increases with $b$. We suggest using a proper block size to gain enough privacy protection, which is discussed in \S\ref{sec:attack}, and benefit from the efficiency brought by the block-based optimizations.

\subsection{Evaluation on the Applications} \label{sec:evaluation_app}

In this section, we evaluate \solution on three applications: PCA, LR, and LSA regarding accuracy and efficiency.

\parab{Lossless:} The lossless evaluations of three applications are presented in \Cref{tab:lossless_evaluation}. For PCA and LSA, we measure the precision by calculating the the projection distance \cite{FedPCA} (\ie, $||\mathbf{U}\mathbf{U}^T-\hat{\mathbf{U}}\hat{\mathbf{U}}^T||_2$) to standalone SVD. For LR, we report the mean square error (MSE) on the training data. PCA and LSA share the same evaluation results because their nature are both truncated SVD. In \Cref{tab:lossless_evaluation}, we set $r=10$ for PCA and LSA. Compared with FedPCA and WDA, \solution consistently has more than 10 orders of magnitude lower projection distance on PCA and LSA applications. On LR application, \solution has the lowest MSE compared to FATE and SecureML which solves LR using SGD. Moreover, \solution only needs to factorize the data once to find the optimal solution, while SGD-based method usually needs multiple epochs of training to converge.

\parab{Efficiency:} \Cref{fig:large_scale_lr} shows the LR time consumption of \solution, FATE and SecureML when we fix $n=1$K and vary $m$ from 1M to 50M, and the results show that \solution is 100x faster than SecureML and 10x faster than FATE. \Cref{fig:time_to_bandwidth_lr} and \Cref{fig:time_to_rtt_lr} show the time consumption of LR under different network bandwidth and latency, the results show that \solution is less sensitive to network compared with SecureML, and \solution achieves consistently best performance under different network conditions. We have performed billion-scale data evaluation on all the applications and the results are reported in \Cref{tab:large_scale_application}. The results show that \solution is practical and successfully supports billion-scale applications.

\begin{table}[h]
    \centering
    \small
    \renewcommand\arraystretch{0.9}
    \setlength{\tabcolsep}{0.25em}
	\caption{Evaluate Applications on Billion-Scale Data. The data is uniformly partitioned on 2 users and network bandwidth=1Gb/s, RTT=50ms.}
	\begin{tabular}{|c|c|c|c|}
		\hline
		Application & Datasets & Data Size & Time \\
		\hline
		\tabincell{c}{PCA\\(top-$r=5$)} & \tabincell{c}{Synthetic Data} & \tabincell{c}{100K $\times$ 1M \\ (100 Billion Elements)} & 32.3 Hours \\
		\hline
		\tabincell{c}{LSA\\(top-$r=256$)} & \tabincell{c}{MovieLens-25M \\ (RealWorld)} & \tabincell{c}{62K $\times$ 162k \\ (10 Billion Elements)} & 3.71 Hours \\
		\hline
		LR & \tabincell{c}{Synthetic Data} & \tabincell{c}{1K $\times$ 50M \\ (50 Billion Elements)} & 13.5 Hours \\
		\hline
	\end{tabular}
	\label{tab:large_scale_application}
	\vspace{-4mm}
\end{table}

\vspace{1mm}
\subsection{Attacks} \label{sec:attack}

We have provided privacy analysis of \solution in \S\ref{sec:privacy_proof} showing that CSP cannot recover the raw data from the masked data when having no prior knowledge. In this section, we assume the CSP empirically choose data distributions as prior knowledge and perform independent component analysis (ICA) attacks \cite{ica_attack} on the masked data. Meanwhile, we set block size to different values and observe its impact on the effectiveness of privacy protection.

The ICA attack is the SOTA attack method on masked data proposed by \citet{ica_attack} for revealing raw data from masked databases. The main idea is to treat the masked data as a linear combination of different data sources, which are assumed to be independent and non-gaussian distributed. The attackers empirically choose distributions of the data sources (\eg, using sigmoid as the cumulative probability distribution function), and try to find the inverse of the linear combination that maximizes the likelihood function.

\vspace{-2mm}
\begin{table}[h]
	\centering
	\small
	\renewcommand\arraystretch{0.9}
	\setlength{\tabcolsep}{0.5em}
	\caption{ICA attacks on the masked data. Pearson correlation between the attack results and raw data are reported.}
	\begin{tabular}{|c|c|c|c|c|}
		\hline
		Attacks & $b$ & MNIST & ML-100K & Wine \\
		\hline
		Random Values & NA & 0.12590 & 0.17957 & 0.49313 \\
		\hline
		ICA & 10 & 0.20329 & 0.18623 & 0.44268 \\
		ICA($b$) & 10 & 0.32029 & 0.28434 & 0.45971 \\
		\hline
		ICA & 100 & 0.12590 & 0.18387 & 0.45183 \\
		ICA($b$) & 100 & 0.13051 & 0.20910 & 0.45826 \\
		\hline
		ICA & 1000 & 0.11104 & 0.18020 & 0.44712 \\
		ICA($b$) & 1000 & 0.12531 & 0.18057 & 0.44862 \\
		\hline
	\end{tabular}
	\vspace{-3mm}
	\label{tab:ica_attacks}
\end{table}

In our experiments, we run ICA attack on both side of the masked data since \solution has two masks, and \Cref{tab:ica_attacks} shows the results. Meanwhile, we also perform attacks assuming the CSP knows the block size $b$, denoted as ICA($b$) in \Cref{tab:ica_attacks}, which reduces the number of parameters to solve in the attack. We use Pearson correlation to assess the attack results. Since ICA has disordered outputs (\ie, recovered data might be shuffled by row or by column), we compute n-to-n matching Pearson correlation between the attack results and real data, and report the maximum value. We use random value as the baseline, and if the Pearson correlation between the attack results and the raw data is close to the Pearson correlation between random value and raw data, then we can conclude that the attack fails in recovering valid information. We can observe from \Cref{tab:ica_attacks} that 1) ICA($b$) is more effective than ICA, which means that knowing $b$ is helpful to the attacks; 2) When increasing $b$ from 10 to 1000, attacking effectiveness of both ICA and ICA($b$) decrease; 3) When setting $b=1000$, all the attacks fail in recovering valid information.

In conclusion, 1) The Pearson correlation between the attack results and raw data decreases with the increase of block size, when the block size is large enough (\eg, 1000 in our experiments), the ICA attack fails in recovering valid information; 2) Leaking the block size reduces the complexity of ICA attack, however, the attack still could be defensed as long as the block size is large enough.

In the application, since different datasets have various distributions, we suggest the users run local ICA attacks and choose a proper block size that can resist the attack.

\subsection{Effectiveness of Proposed Optimizations} \label{sec:optimization}

In this section, we compare the efficiency with and without the proposed optimizations to show the effectiveness of our design.

We categorize three types of optimizations from our system: 1) Opt1: the block-based optimizations including efficient mask initialization, data masking, and recovery of $\mathbf{V'^T}$; 2) Opt2: mini-batch secure aggregation; 3) Opt3: advanced disk offloading. \Cref{fig:optimizations} shows the evaluation results using 10K $\times$ 50K synthetic data. Compared with using no optimizations, our solution reduces the communication, time consumption, and memory usage by 73.2\%, 81.9\%, and 95.6\%, respectively. To further demonstrate the effectiveness of Opt3, we compare the efficiency of RAM+AdvancedOffLoading and RAM+SwapOffLoading on larger data (10K $\times$ 100K), the results show that our solution reduces the time consumption by 44.7\% compared with swap disk offloading scheduled by OS.

\section{Related Work}

Apart from the federated SVD methods introduced in \S\ref{sec:introduction}, there are also other research topics that closely related to our work:

\parab{Privacy-preserving Funk-SVD:} The Funk-SVD is utilized in the federated recommender system \cite{DBLP:series/lncs/YangTZCY20}. The major difference between Funk-SVD and SVD is that Funk-SVD runs on the sparse rating matrix. \citet{chai2020secure} solved the federated Funk-SVD problem using HE. \citet{berlioz2015applying} proposed a DP-based Funk-SVD method.

\parab{Outsourcing matrix factorization techniques:} The secure outsourcing computation is a traditional research topic. \citet{zhang2019secure} proposed a secure outsourcing computation framework for PCA-based face recognition. \citet{duan2019secure} proposed outsourcing computation frameworks for non-negative matrix factorization. \citet{luo2017secfact} proposed a masking based outsourcing computation method for QR and LU factorization.

\section{Conclusion}

In this paper, we propose a practical lossless federated SVD method over billion-scale data. Compared with the existing federated SVD methods, \solution is lossless and efficient. The experiments show that \solution is over $10000\times$ faster than HE-based method and has 10 orders of magnitude smaller error compared DP-based method.

\begin{acks}
The work is supported by the Key-Area Research and Development Program of Guangdong Province (2021B0101400001), the NSFC Grant no. 61972008, the Hong Kong RGC TRS T41-603/20R, the National Key Research and Development Program of China under Grant No.2018AAA0101100, and the Turing AI Computing Cloud (TACC) \cite{xu2021tacc}.
\end{acks}

\bibliographystyle{ACM-Reference-Format}
\bibliography{ref.bib}

\appendix

\section{Datasets and Baseline Models} \label{appendix:dataset}

\parab{Datasets:} We have used five datasets in the experiments. Following is the detailed description and the parameter settings:

\begin{icompact}
	\item MNIST \cite{lecun1998gradient}: A standard hand-written digits image testset, and each image contains 784 (i.e., $28 \times 28$) features. We take 10K labeled images in the experiment, thus $X_{mnist} \in \mathbb{R}^{784 \times 10K}$.
	\item Wine \cite{Dua:2019}: The physicochemical data for 6498 variants of red and white wine, and each sample has 12 features. Thus $X_{wine} \in \mathbb{R}^{12 \times 6489}$.
	\item movielens \cite{ml100k}: Movielens dataset describes people’s expressed preferences for movies. It contains contains millions of users' rating records over different movies. We select two groups of movielens data: movielens-100K and movielens-25M for our experiment. Movielens-100k contians 943 users' rating on 1682 movies, thus $X \in \mathbb{R}^{1682 \times 943}$. Movielens-25M contians 162542 users' rating on 59047 movies, thus $X \in \mathbb{R}^{59047 \times 162542}$.
	\item Synthetic data \cite{FedPCA}: Apart from the real-world datasets, we also use synthetic data in the evaluation. The synthetic data is generated from a power-law spectrum $Y_{\alpha} \sim Synth(\alpha)^{m \times n}$ using $\alpha=0.01$. More specifically, $Y=U\Sigma V^T$, where $[U, \sim]=QR(N^{m \times m}),[V, \sim]=QR(N^{m \times n}), \Sigma_{i,i}=i^{-\alpha}$, and $N^{m \times n}$ is an matrix with i.i.d entries drawn from $\mathcal{N}(0,1)$.
\end{icompact}

\parab{Baseline Models:} We compare FedSVD with three existing works, and following is the detailed introduction and parameter setting.

\begin{icompact}
    \item WDA-PCA \cite{WDA}: In the weighted distributed averaging PCA (WDA-PCA), the participants upload local rank-$k$ approximation of the covariance matrix to the server, which will aggregate all the approximations through weighted average and do a rank-$k$ PCA on the aggregated matrix to get the final results. WDA-PCA reduces the private data leakage since each users only uploads a rank-$k$ approximation of the covariance matrix. In our experiments, we only compared FedSVD and WDA-PCA in PCA applications, since WDA-PCA is specially designed for rank-$k$ PCA and not suitable for SVD tasks.
    \item FedPCA \cite{FedPCA}: Federated principal component (FedPCA) analysis is a federated, asynchronous, and $(\epsilon,\delta)$-differentially private algorithm. Follow the setting in \cite{FedPCA}, we set $\epsilon=0.1,\delta=0.1$. We compare FedSVD and FedPCA in both PCA and SVD tasks. 
    \item PPD-SVD \cite{PPD-SVD}: Privacy-preserving decentralized SVD (PPD-SVD) used homomorphic encryption to protect user's private data during the joint computation of covariance matrix, then decrypt the covariance matrix and do regular SVD tasks. According to the original paper's setting, we set the key size of HE to 1024.
\end{icompact}

\parab{Hardware:} All the experiments are performed on a Ubuntu 20.04 Server with a 3.6GHz 8-core CPU, 128GB RAM, and 2TB SSD. The programming language is Python. For all the experiments, we put participants into different Docker containers, which are connected using the docker-bridge network, and we simulate the network bandwidth and latency between containers.

\end{document}